\documentclass[12pt]{article}
\usepackage{geometry} % see geometry.pdf on how to lay out the page. There's lots.
\geometry{a4paper} 
\usepackage{amsfonts}
\usepackage{amsmath}
\usepackage{amsxtra}
\usepackage{tipa}
\pagestyle{plain}
\usepackage{graphicx}
\usepackage{fullpage}

\newtheorem{theorem}{Theorem}

\newtheorem{proposition}[theorem]{Proposition}

\newenvironment{proof}[1][Proof]{\begin{trivlist}
\item[\hskip \labelsep {\bfseries #1}]}{\end{trivlist}}

\begin{document}

\vspace{4cm}
\noindent
{\bf \Large A Fuchsian matrix differential equation for Selberg correlation integrals}

\vspace{5mm}
\noindent
Peter J.~Forrester${}^*$
 and Eric M.~Rains${}^\dagger$

\noindent
${}^*$Department of Mathematics and Statistics,
University of Melbourne, \\
Victoria 3010, Australia ; \\
${}^\dagger$
Department of Mathematics, California Institute of Technology, Pasadena, CA 91125, USA

\small
\begin{quote}
We characterize averages of $\prod_{l=1}^N|x - t_l|^{\alpha - 1}$ with respect to the Selberg density,
further contrained so that $t_l \in [0,x]$ $(l=1,\dots,q)$ and $t_l \in [x,1]$ $(l=q+1,\dots,N)$, in terms
of a basis of solutions of a particular Fuchsian matrix differential equation. By making use of the Dotsenko-Fateev integrals, the explicit form of the connection matrix from the Frobenius type power series basis to this basis is calculated, thus allowing us to explicitly compute coefficients in the power series expansion of the averages. From these we are able to compute power series for the marginal distributions of the $t_j$ $(j=1,\dots,N)$. In the case $q=0$ and $\alpha < 1$ we compute the
explicit leading order term in the $x \to 0$ asymptotic expansion, which is of interest to the study of an effect known as singularity dominated strong fluctuations. In the case $q=0$ and $\alpha  \in \mathbb Z^+$, and with
the absolute values removed, the average is a polynomial, and we demonstrate that its zeros are
highly structured.
\end{quote}

\section{Introduction}
The Selberg density refers to the probability density function (PDF)
\begin{equation} \label{S}
S_N(\lambda_1,\lambda_2,\lambda;\mathbf{t}):=\frac{1}{S_N(\lambda_1,\lambda_2,\lambda)}\prod_{l=1}^Nt_l^{\lambda_1}(1-t_l)^{\lambda_2}\prod_{1\leq j<k \leq N}|t_k-t_j|^{2\lambda},
\end{equation}
supported on $\mathbf{t}\in[0,1]^n$, where $S_N$ denotes the Selberg integral
\begin{eqnarray}
S_N(\lambda_1,\lambda_2,\lambda)&:=&\int_{[0,1]^n}\prod_{l=1}^Nt_l^{\lambda_1}(1-t_l)^{\lambda_2}\prod_{1\leq j<k \leq N}|t_k-t_j|^{2\lambda} \nonumber \\
& = & \prod_{j=0}^{N-1} {\Gamma(\lambda_1 + 1 + j \lambda) \Gamma(\lambda_2 + 1 + j \lambda) 
\Gamma(1 + (j+1)\lambda) \over \Gamma(\lambda_1 + \lambda_2 + 2 + (N+j-1)\lambda)
\Gamma(1 + \lambda)}.\label{2a}
\end{eqnarray}
(see \cite{FW07p} for a recent review).

Special cases of (\ref{S}), restricted to $2\lambda\in\{1,2,4\}$ and $\lambda_1,\lambda_2$ integers or half integers, arise in classical random matrix theory. For example, consider an $N\times N$ random real orthogonal matrix. With $A$ denoting an $n_1 \times n_2$ sub-block of elements ($n_1\geq n_2, N\geq n_1+n_2$) the eigenvalue PDF for $A^TA$ is given by (\ref{S}) with 
\begin{equation}\label{1}
\lambda_1=\frac{1}{2}(n_1-n_2-1),\hspace{1cm}\lambda_2=\frac{1}{2}(N-n_1-n_2-1),\hspace{1cm}2\lambda=1
\end{equation} 
(see e.g.~\cite[pg.124]{Fo10} ). For general parameters such that (\ref{S}) is well defined, it can be realized as the zero distribution of certain polynomials defined by random three term recurrences \cite{KN04,FR02b,ES06a}, or equivalently as the eigenvalue distribution for certain matrices constructed from of order $N$ (not of order $N^2$) real numbers. For definiteness, we will interpret the variables $\mathbf{t}$ in (\ref{S}) as eigenvalues.

The integral 
\begin{align}
E_N(n;(0,x)):=&\binom{N}{n}\int_{[0,x]^n}dt_1\ldots dt_n\int_{[x,1]^{N-n}}dt_{n+1}\ldots dt_N \,
S_N(\lambda_1,\lambda_2,\lambda;\mathbf{t}) \label{E}
\end{align}
gives the probability that exactly $n$ eigenvalues are contained in the interval $(0,x).$ Let $p_N(n;x)$ denote the PDF for the event that there is an eigenvalue at $x$, with exactly $n$ eigenvalues to the left. Thus $p_N(n;x)$ denotes the PDF for the $(n+1)$-st smallest eigenvalue. It relates to the probabilities (\ref{E}) via the formula
\begin{equation}\label{2}
p_N(n;x)=-\sum_{k=0}^n\frac{d}{dx}E_N(k;(0,x)).
\end{equation}

For its relevance to multivariate statistics, the task of computing the PDFs $p_N(n;x)$ in the case of the parameters (\ref{1}) was undertaken some years ago by Davis \cite{Da72}. His strategy consisted of three steps.
\begin{enumerate}
  \item Establish that $\{E_N(n;(0,x))\}_{n=0,1,\ldots,N}$ occur as part of the matrix of fundamental solutions of a certain Fuchsian matrix differential equation.
   \item To (attempt) to determine the connection matrix transforming the matrix of fundamental solutions implied by the Frobenius type power series solution, both about $x=0$ and $x=1$, to the matrix of fundamental solutions involving (\ref{E}).
   \item To use the differential equation to generate the Frobenius type solutions, and then from step $2$. use the connection matrices to form the appropriate linear combination required to compute the series expansions of $\{E_N(n;(0,x))\}_{n=0,1,\ldots,N}$.
\end{enumerate}

Because step $2.$~was not successfully carried out, the strategy was never executed in its entirety. Here we will return to this problem, and execute all three steps as required for the generation of (\ref{E}) as a power series about $x=0$ and $x=1$, and thus the corresponding power series of (\ref{2}).

The multidimensional integrals generalizing (\ref{E}) by the inclusion of a factor $\prod_{l=1}^N|x-t_l|^{2\mu}$ can also be characterized as solutions of a matrix Fuchsian differential equation. In particular, with $\langle \cdot \rangle$ denoting an average with respect to the Selberg density (\ref{S}), this allows us to calculate the series expansion of the averages of moments of characteristic polynomials
\begin{equation} \label{8.1}
\Big\langle \prod_{l=1}^N |x-t_l|^{2\mu} \Big\rangle.
\end{equation}
Two applications then present themselves. The first relates to the case $2 \mu \in \mathbb Z^+$, when
(\ref{8.1}) is a polynomial. As detailed in Section 3.1 it is known that the zeros of closely related
polynomials are highly structured. We use our ability to compute the polynomial cases of (\ref{8.1}) to (numerically) compute the corresponding zeros for a wide range of the parameters
$N,\lambda_1,\lambda_2,\lambda$. At a qualitative level  families of curves for the loci of zeros are
identified, and for large $\lambda$ we provide an  approximation to the real part of the zeros. 

The second application relates to the series expansion of (\ref{8.1}) with $\mu<0$ as $x\rightarrow1^+$, interpreted as an asymptotic expansion. This is a problem which has attracted previous attention. 
It turns out that then the (near) degeneracies in the spectrum of varying order are being probed, and one has an effect known as singularity dominated strong fluctuations \cite{BK02}. The latter has many applications in addition to the one in number theory discussed in \cite{BK02}. For example, it applies to the analysis of twinkling starlight \cite{Be77}, van Hove-type singularities \cite{Be82}, and the influence of classical periodic orbit bifurcations on quantum energy level \cite{BKS00} and wave function statistics \cite{KP01}.

Let the notation $A(x)\approx B(x)$ mean that there exists two positive numbers $C_1$ and $C_2$ independent of $x$ such that \begin{equation}\label{6a}
C_1 \leq \frac{A(x)}{B(x)}\leq C_2.
\end{equation}
In \cite{FK04} (\ref{8.1}) was expressed as a certain generalized hypergeometric function based on Jack polynomials (see e.g. \cite[Ch. 13]{Fo10}). For the latter Yan \cite{Ya92} had obtained estimates  of the form (\ref{6a}). Consequently it was shown in \cite{FK04} (with $b\mapsto b-1$ to account for the fact that the correction $a\mapsto a+1$ is required in the second argument of ${_2F}_1^{(\beta)}$ in (\ref{r1})) that for $\mu<0$ and $x>1$, with $|\mu|$ taking on the explicit value
\begin{equation}\label{9.a}
2|\mu|=2 \lambda(j-1)+1+\lambda_2, \hspace{1.5cm} j\in \{1,\ldots, N \},
\end{equation}
one has 
\begin{equation}\label{9.1}
x^{-2\mu N}\Big\langle\prod_{l=1}^N|x-x_l|^{2\mu} \Big\rangle\approx(1-1/x)^{-(j-1)j\lambda}\log\frac{1}{1-1/x}.
\end{equation}
Furthermore, for $\mu<0$, $x>1$ and with $|\mu|$ such that 
\begin{equation}\label{9.b}
2\lambda(j-1)+\lambda_2+1<2|\mu| <2\lambda j + \lambda_2+1 \hspace{1.5cm} j\in \{1,\ldots, N \},
\end{equation}
it was shown
\begin{equation}\label{9.2}
x^{-2\mu N} \Big\langle \prod_{l=1}^N |x-x_l|^{2\mu} \Big\rangle \approx (1-1/x)^{-j(2|\mu|-\lambda_2-1-(j-1)\lambda)}.
\end{equation}
Note that (\ref{9.b}) implies that $j$ on the RHS of (\ref{9.2}) can be written 
\begin{equation}\label{9.c}
j=\rm{int}[((2|\mu|-\lambda_2-1)/2\lambda)+1], \hspace{1.5cm} (2|\mu|-\lambda_2-1)/2\lambda+1 \not\in\mathbb{Z},
\end{equation}
where int$[\cdot]$ denotes the integer part.

The present study allows the results (\ref{9.1}) and (\ref{9.2}) to be supplemented by the specification of the series expansion about $x=1$, giving in particular the proportionalities in the leading asymptotic forms.

\section{The Fuchsian matrix differential equation}
\subsection{Specification}

Our aim in this Section is to characterize the family of integrals
\begin{equation}\label{ea}
I_q^{(\alpha)}(x):=\binom{N}{q}\int_{0}^xdt_1\ldots\int_0^xdt_{q}\int_x^1dt_{q+1}\ldots\int_x^1dt_NF(t_1,\ldots,t_N;\alpha;x),
\end{equation}
where
\begin{equation*}
F(t_1,\ldots,t_N;\alpha;x):=\prod_{l=1}^N t_l^{\lambda_1}(1-t_l)^{\lambda_2}|x-t_l|^{\alpha-1}\prod_{1\leq j<k\leq N}|t_k-t_j|^{2\lambda}
\end{equation*}
in terms of the solution of a Fuchsian matrix differential equation. 
The integrals are well defined for
\begin{equation}\label{ea1}
{\rm Re} \, \lambda_1 > -1, \quad {\rm Re} \, \lambda_2 > -1, \quad {\rm Re} \, \alpha > 0,
\quad {\rm Re} \, \lambda > 0.
\end{equation}
In the case $\alpha=1,$ $2\lambda=1,$ which is recalled in the introduction has motivation in mathematical statistics, this was first accomplished by Davis \cite{Da72}.

Introduce the generalization of (\ref{ea})
\begin{align}
J_{p,q}^{(\alpha)}(x):=&\int_0^xdt_1\ldots \int_0^xdt_{q}\int_x^1dt_{q+1}\ldots \int_x^1dt_N F(t_1,\ldots,t_N;\alpha;x) \notag \\
&\hspace{1cm}\times e_p(t_1-x,\dots,t_N-x) \label{eb}
\end{align}
where $e_p$ denotes the elementary symmetric function
\begin{equation*}
e_p(x_1,\ldots,x_N):=\sum_{1\leq j_1 <j_2 < \ldots < j_p \leq N} x_{j_1}x_{j_2}\ldots x_{j_N}.
\end{equation*}
It was shown in \cite{Fo93} that for general parameters (\ref{eb}) satisfies the differential-difference equation 
\begin{align}
&(N-p)E_p J_{p+1,q}^{(\alpha)}(x)= \notag \\
&\hspace{2cm} -(A_px+B_p) J_{p,q}^{(\alpha)}(x)+x(x-1)\frac{d}{dx}J_{p,q}^{(\alpha)}(x)+D_px(x-1)J_{p-1,q}^{(\alpha)}(x) \label{ec}
\end{align}
where
\begin{align}
A_p&=(N-p)(\lambda_1+\lambda_2+2\lambda(N-p-1)+2\alpha) \notag\\
B_p&=(p-N)(\lambda_1+\lambda(N-p-1)+\alpha) \notag\\
D_p&=p(\lambda(N-p)+\alpha) \notag\\
E_p&=\lambda_1+\lambda_2+\lambda(2N-p-2)+\alpha +1 \label{ed}.
\end{align}
We see from (\ref{ed}) that the equation (\ref{ec}) is independent of $q$; in fact the derivation of \cite{Fo93} was carried out with $q=N$. However the same derivation ---  which relies on integration by parts --- applies for general $q=0,1,\ldots,N.$

Let us fix $q$ and form the column vector
\begin{equation} \label{17a}
\mathbf{J}_q^{(\alpha)}(x)=[J_{p,q}^{(\alpha)}(x)]_{p=0,1,\ldots,N},
\end{equation}
It follows immediately from (\ref{ec}) that $\mathbf{J}_q^{(\alpha)}$ satisfies a particular Fuchsian matrix differential equation. This in turn (as noted by Davis \cite{Da72} in the special case of that work) assumes a simpler form if we set 
\begin{equation}\label{JH}
J_{p,q}^{(\alpha)}(x) = (x-1)^p H_{p,q}^{(\alpha)}(x), \qquad \mathbf{H}_q^{(\alpha)}(x) =
 [H_{p,q}^{(\alpha)}(x)]_{p=0,1,\dots,N}.
 \end{equation}
 
\begin{proposition}\label{2.1}
We have
\begin{equation} \label{ee}
\frac{d}{dx}\mathbf{H}_q^{(\alpha)}=\bigg(\frac{Y^+}{x}+\frac{Y^{-}}{1-x}\bigg)\mathbf{H}_q^{(\alpha)}(x)
\end{equation}
where
\[ Y^{+}=\left[ \begin{array}{ccccc}
\sigma_N & a_{N-1} & 0 & \cdots & 0 \\
0 & \sigma_{N-1} & a_{N-2} & \cdots & \vdots \\
0 &  \ddots & \ddots & \ddots & 0 \\
\vdots &&&\sigma_{1}&a_{0} \\
0&\cdots&0&0&\sigma_0
\end{array} \right] \] 
\[ Y^{-}=\left[ \begin{array}{ccccc}
b_N & 0 & 0 & \cdots & 0 \\
c_N & b_{N-1} & 0 &  & \vdots \\
0 &  \ddots & \ddots & \ddots & 0 \\
\vdots &&c_{2}&b_{1}& 0 \\
0&\cdots&0&c_1&b_0
\end{array} \right] \] 
with
\begin{align}
\sigma_i&=-B_{N-i}=i(\lambda_1+\lambda(i-1)+\alpha) \notag \\
a_i&=-iE_{N-i}=-i(\lambda_1+\lambda_2+\lambda(N+i-2)+\alpha +1) \notag \\
b_i&=-(A_{N-i}+B_{N-i})+N-i=-i(\lambda_2+\lambda(i-1)+\alpha)+N-i \notag\\
c_i&= D_{N-i}= (N-i)(\lambda i+\alpha) \label{19a}
\end{align}
\end{proposition}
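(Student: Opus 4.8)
The plan is to derive (\ref{ee}) directly from the three-term relation (\ref{ec}), with the substitution (\ref{JH}) serving as the change of variables that brings the system to its simple two-pole form. First I solve (\ref{ec}) for the derivative: for $p=0,1,\dots,N$,
\[
x(x-1)\frac{d}{dx}J_{p,q}^{(\alpha)}=(N-p)E_p\,J_{p+1,q}^{(\alpha)}+(A_px+B_p)\,J_{p,q}^{(\alpha)}-D_px(x-1)\,J_{p-1,q}^{(\alpha)},
\]
with the boundary conventions $J_{-1,q}^{(\alpha)}\equiv0$ and $J_{N+1,q}^{(\alpha)}\equiv0$ (the latter since $e_{N+1}\equiv0$ in $N$ variables), so that the $p=0$ and $p=N$ rows close up. Dividing by $x(x-1)$ presents $\mathbf{J}_q^{(\alpha)}$ as a solution of a first-order linear system whose coefficient matrix is $\frac{1}{x(x-1)}$ times a matrix linear in $x$, together with a constant matrix (the $-D_p$ multiplying $J_{p-1,q}^{(\alpha)}$); the role of (\ref{JH}) is to absorb this constant matrix.

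Next I substitute $J_{p,q}^{(\alpha)}=(x-1)^pH_{p,q}^{(\alpha)}$. On the left the product rule gives $x(x-1)\frac{d}{dx}J_{p,q}^{(\alpha)}=px(x-1)^pH_{p,q}^{(\alpha)}+x(x-1)^{p+1}\frac{d}{dx}H_{p,q}^{(\alpha)}$; on the right, $J_{p+1,q}^{(\alpha)}=(x-1)^{p+1}H_{p+1,q}^{(\alpha)}$ contributes one extra factor $(x-1)$ and $x(x-1)J_{p-1,q}^{(\alpha)}=x(x-1)^pH_{p-1,q}^{(\alpha)}$ contributes one extra factor $x$. Every term is divisible by $(x-1)^p$; cancelling it and dividing by $x(x-1)$ gives
\[
\frac{d}{dx}H_{p,q}^{(\alpha)}=\frac{(N-p)E_p}{x}\,H_{p+1,q}^{(\alpha)}+\frac{(A_p-p)x+B_p}{x(x-1)}\,H_{p,q}^{(\alpha)}-\frac{D_p}{x-1}\,H_{p-1,q}^{(\alpha)}.
\]
The structural point is now visible: the spare factor $(x-1)$ makes the $H_{p+1}$ term a pure $1/x$ contribution, the spare factor $x$ makes the $H_{p-1}$ term a pure $1/(1-x)$ contribution, and only the middle term must be split, via the partial fractions $\frac{(A_p-p)x+B_p}{x(x-1)}=\frac{-B_p}{x}+\frac{A_p+B_p-p}{x-1}$. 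Collecting the $1/x$ terms gives an upper-bidiagonal matrix and the $1/(1-x)$ terms a lower-bidiagonal matrix, which is precisely the form of (\ref{ee}); the constant term has disappeared, so the only singular points are $0$, $1$, $\infty$ and the equation is Fuchsian.

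It remains to identify the entries with (\ref{19a}). From the last display, the $1/x$ matrix has diagonal $-B_p$ and super-diagonal $(N-p)E_p$, while the $1/(1-x)$ matrix has diagonal $p-A_p-B_p$ and sub-diagonal $D_p$. Inserting the values (\ref{ed}) --- in particular the simplification $A_p+B_p=(N-p)(\lambda_2+\lambda(N-p-1)+\alpha)$, so that the diagonal of the $1/(1-x)$ matrix reads $p-(N-p)(\lambda_2+\lambda(N-p-1)+\alpha)$ --- puts each entry in closed form in $p$, and re-indexing by $i=N-p$ (which places the $p=0$ row at the top of $\mathbf{H}_q^{(\alpha)}$) identifies $-B_p$ with $\sigma_i$, $p-A_p-B_p$ with $b_i$, $D_p$ with $c_i$ and the super-diagonal entry with $a_i$, exactly as in (\ref{19a}); the boundary conventions account for the vanishing corner entries such as $\sigma_0=0$. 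The only real labour is bookkeeping --- keeping the powers of $(x-1)$ straight through the substitution, and aligning the shifted off-diagonal indices with the positions displayed in $Y^+$ and $Y^-$ under $i=N-p$ --- so I do not anticipate any analytic difficulty; everything follows from (\ref{ec}) by the elementary manipulations above.
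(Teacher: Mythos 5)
Your overall strategy is exactly the one the paper intends: Proposition \ref{2.1} is stated without proof, the text treating it as immediate from (\ref{ec}) together with the substitution (\ref{JH}), and your computation through the partial-fraction split is correct. Solving (\ref{ec}) for the derivative, substituting $J_{p,q}^{(\alpha)}=(x-1)^pH_{p,q}^{(\alpha)}$, cancelling $(x-1)^p$ and dividing by $x(x-1)$ does give
\begin{equation*}
\frac{d}{dx}H_{p,q}^{(\alpha)}=\frac{1}{x}\Bigl(-B_p\,H_{p,q}^{(\alpha)}+(N-p)E_p\,H_{p+1,q}^{(\alpha)}\Bigr)+\frac{1}{1-x}\Bigl(\bigl(p-A_p-B_p\bigr)H_{p,q}^{(\alpha)}+D_p\,H_{p-1,q}^{(\alpha)}\Bigr),
\end{equation*}
a two-pole system with upper-bidiagonal residue at $0$ and lower-bidiagonal residue at $1$, whose diagonal entries match the statement: $-B_p=\sigma_{N-p}$ and $p-(A_p+B_p)=b_{N-p}$.

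The gap is in the last step, which you dismiss as bookkeeping but never actually carry out: the off-diagonal entries you derive do \emph{not} coincide with the matrices as displayed. In the row whose diagonal entry is $\sigma_i$ (so $i=N-p$), your superdiagonal entry is $(N-p)E_p=iE_{N-i}=-a_i$, whereas the displayed $Y^+$ places $a_{i-1}=-(i-1)E_{N-i+1}$ there; likewise your subdiagonal entry is $D_p=D_{N-i}=c_i$, whereas the displayed $Y^-$ places $c_{i+1}=D_{N-i-1}$ there. There is thus an index shift on both off-diagonals and a sign discrepancy on $Y^+$. That this is not a slip in your algebra can be seen at $N=1$, $\lambda_1=\lambda_2=0$, $\alpha=2$, $q=1$: then $H_{0,1}^{(\alpha)}=x^2/2$, $H_{1,1}^{(\alpha)}=x^3/(3(1-x))$, and one checks directly that
\begin{equation*}
\frac{d}{dx}H_{0,1}^{(\alpha)}=x=\frac{1}{x}\Bigl(2H_{0,1}^{(\alpha)}+3H_{1,1}^{(\alpha)}\Bigr)-\frac{2}{1-x}H_{0,1}^{(\alpha)},\qquad
\frac{d}{dx}H_{1,1}^{(\alpha)}=\frac{1}{1-x}\Bigl(2H_{0,1}^{(\alpha)}+H_{1,1}^{(\alpha)}\Bigr),
\end{equation*}
so the $(0,1)$ entry of $Y^+$ is $E_0=3$ while $a_0=0$, and the $(1,0)$ entry of $Y^-$ is $D_1=2$ while $c_1=0$. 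No change of normalization $H_{p,q}^{(\alpha)}\mapsto\gamma_pH_{p,q}^{(\alpha)}$ can repair this, since such a rescaling preserves the products $Y^+_{p,p+1}Y^-_{p+1,p}$ and these disagree ($6$ versus $0$ in the example). So your derivation proves a correct proposition whose off-diagonal labels differ from the printed ones; the assertion that everything matches ``exactly as in (\ref{19a})'' is unverified and, as the statement is printed, false. A complete write-up must perform the re-indexing explicitly and record the corrected off-diagonal entries ($iE_{N-i}$ adjacent to $\sigma_i$, and $c_i=D_{N-i}$ one position from where the display puts it).
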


\noindent
{\bf Remarks.}

\begin{enumerate}
\item In terms of $\mathbf{J}_q^{(\alpha)}(x)$, the structure of the matrices on the RHS of
(\ref{ee}) alters and the number of matrices increases from two to three.
\item Suppose that instead of expanding about $x=0$ we wanted to expand about $x=1$.
With $y=1-x$ and $\overline{\mathbf{H}}_q^{(\alpha)}(y) := \mathbf{H}_q^{(\alpha)}(x) $ we see from
(\ref{ee}) that
\begin{equation}\label{Hq}
- {d \over dy} \overline{\mathbf{H}}_q^{(\alpha)}(y)  = \Big (
{1 \over y} Y^- + {1 \over 1 - y} Y^+ \Big )  \overline{\mathbf{H}}_q^{(\alpha)}(y) .
 \end{equation}
 Moreover, it follows from (\ref{eb}) that
\begin{equation}\label{Hql} 
J_{p,q}^{(\alpha)}(1-x) =(-1)^p J_{p,N-q}^{(\alpha)}(x) \Big |_{\lambda_1 \leftrightarrow \lambda_2}.
 \end{equation}
 \item In \cite{Mi07} the integrals (\ref{eb}) with $e_p(t_1-x,\dots,t_N-x)$ replaced by
 $$
 \prod_{j=1}^N (t_j - 1)^{-1} e_p ( 1 - 1/t_1,\dots, 1 - 1/t_p)
 $$
 have been considered, and a recurrence corresponding to a matrix differential system with a structure
 identical to that exhibited by (\ref{ee}) obtained.
 \item By examining the $x\rightarrow0$ behaviour of the integrals in (\ref{17a}) (see Proposition \ref{prop2} below) it can be established that
 \begin{equation*}
J(x):={\rm det} [ \mathbf{J}_q^{(\alpha)}(x)]_{q=0,\ldots,N}\not=0.
 \end{equation*}
 This was noted by Davis \cite{Da72} in the special case of that work. Moreover, according to (\ref{JH})
 \begin{equation*}
J(x):=(x-1)^{N(N+1)/2} {H}(x),\hspace{1.5cm} H(x):={\rm det} [ \mathbf{H}_q^{(\alpha)}(x)]_{q=0,\ldots,N}
 \end{equation*}
 while it follows from (\ref{ee}) that
 \begin{equation*}
{H}'(x)=\Big(\frac{1}{x}{\rm Tr}\hspace{0.1cm}Y^++\frac{1}{1-x}{\rm Tr} \hspace{0.1cm}Y^- \Big){H}(x).
 \end{equation*}
 Thus
 \begin{equation*}
 {H}(x)= {H}_0 \, x^{{\rm Tr} \hspace{0.05cm}Y^+}(1-x)^{{\rm -Tr} \hspace{0.05cm}Y^-}
  \end{equation*}
 for some ${H}_0$ independent of $x$, which can in fact be made explicit from the $x \rightarrow 0$ asymptotic form. Such determinants of Selberg integrals have been considered in e.g. \cite{Va90a,Va90b}.
 \end{enumerate}

\subsection{Series solutions}

The differential equation (\ref{ee}) is a particular example of a matrix differential system
\begin{equation}\label{WA}
\frac{d}{dx}\mathbf{w}_j(x)=A(x)\mathbf{w}_j(x), \hspace{1.5cm} (j=1,\ldots,N),
\end{equation}
where $A(x)$ has a simple pole at $x=0$ and is analytic for $0<|x|<1$. The theory of such systems is well established \cite[Section 24]{Wa98a}. In particular, one has that there is a fundamental matrix of $N+1$ linearly independent vector solutions of the form
\begin{equation}\label{Ub}
W(x)=U(x)x^B=:U(x)e^{B\log x}.
\end{equation}
Here $U(x)$ is a matrix analytic for $|x|<1$ and so can be expressed as a power series while $B$ is a constant matrix. If $B$ is diagonalizable
\begin{equation}\label{Ux}
e^{B \log x}={\rm diag}(x^{\nu_0},x^{\nu_1},\ldots,x^{\nu_{N}})
\end{equation}
where $\{ \nu_i \}$ are the eigenvalues of $B$. Otherwise $C^{-1}x^{B \log x}C$ has diagonal terms of the form $x^{\nu_i}$ while the upper triangular terms, partitioned into Jordan blocks,
are polynomials in $\log x$ of degree strictly less than the row size of the corresponding Jordan block.

In practice, unless (\ref{Ux}) holds, the decomposition (\ref{Ub}) is not accessed directly from (\ref{WA}). Instead one seeks solutions of the form
\begin{equation} \label{WA1}
\mathbf{w}_k(x)=x^{\sigma_k}\sum_{l=0}^\infty \mathbf{p}_l(\log x) x^l,
\end{equation}
where the $\mathbf {p}_l(\log z)$ --- which also depend on $k$ ---  are polynomials in $\log z$ of degree bounded as for $B$ and $\{\sigma_k\}_{k=0,1\dots,N}$ are the eigenvalues of
$A_1 := \lim_{x \to 0} x A(x)$. Furthermore the $\{\nu_j\}$ in (\ref{Ux}) and the $\{\sigma_k\}$ must be
equal modulo an integer.

Suppose in (\ref{WA}) we write
\begin{equation*}
A(x)=\frac{1}{x}\sum_{j=0}^\infty A_jx^j.
\end{equation*}
For $\sigma$ an eigenvalue of $A_0$, and with $s=\log x$, it is straightforward to show that the polynomials in (\ref{WA1}) are determined by the recurrences
\begin{align}
\mathbf{p}_0'+(\sigma \mathbb{I}-A_0)\mathbf{p}_0&=\mathbf{0} \notag \\
\mathbf{p}_l'+((\sigma+l)\mathbb{I}-A_0)\mathbf{p}_l&=\sum_{i=0}^{l-1}A_{l-i}\mathbf{p}_i \hspace{1cm} (l=1,2,\ldots) \label{89}.
\end{align}
The simplest situation for the solution of these recurrences is when $\mathbf{p}_l'=0$ for each $l=0,1,\ldots$, and thus there are no $\log$ terms. For this to occur it is sufficient that 
\renewcommand{\labelenumi}{(\roman{enumi})}
\begin{enumerate}
\item $A_0$ is diagonalizable; 
\item no two eigenvalues of $A_0$ differ by an integer\label{90}.
\end{enumerate}
Of these only (i) is necessary, as it possible that a pair of eigenvalues of $A_0$ differ by an integer yet there is no $\log$ terms. The latter scenario occurs exactly when for some $l$ the matrix $(\sigma+l)\mathbb{I}-A_0$ is singular, the $\mathbf{p}_i$ for $i<l$ are all constant vectors, and $\sum_{i=0}^{l-1}A_{l-i}\mathbf{p}_i$ is in the column space of $(\sigma+ l)\mathbb{I}-A_0$. Then the second equation in (\ref{89}) still permits a solution with $\mathbf{p}_l'=0$. Moreover the solution contains an arbitrary parameter corresponding to a scalar multiple of a vector from the null space, implying that the eigenvalues $\sigma$ and $\sigma+l$ of $A_0$ give linearly independent series solutions. 

In the situation that conditions (i) and (ii) hold we have $\mathbf{p}_0$ is an eigenvector of $A_0$ corresponding to the eigenvalue $\sigma_k$. The vectors $\{\mathbf{p}_l\}_{l=1,2,\ldots}$ are then given by the recurrence 
\begin{equation}\label{26a}
\mathbf{p}_l=((\sigma_k+l)\mathbb{I}-A_0)^{-1}\sum_{i=0}^{l-1}A_{l-i}\mathbf{p}_i.
\end{equation}
Furthermore, when condition (ii) does not hold but there are no $\log$ terms, the recurrence (\ref{26a}) can be used if we first perturb the parameter $\lambda_1$ (say) so that $(\sigma_k+l)\mathbf{I}-A_0$ is no longer singular. After $p_l$ has been computed, we can then take the limit that $\lambda_1$ returns to its original value.

\subsection{Connection matrix}
Let $\mathbf{w}_k(x)=:[(\mathbf{w}_k(x))_j]_{j=0,\ldots,N}$ be the vector of power series solutions of (\ref{ee}) corresponding to the eigenvalue $\sigma_k$ of $Y^+$. Let
\begin{equation}\label{28'}
\mathbf{H}_q^{(\alpha)}(x)=[(x-1)^{-p}J_{p,q}^{(\alpha)}(x)]_{p=0,\ldots,N}
\end{equation}
be the vector formed from the integrals (\ref{eb}) and also satisfying (\ref{ee}). Form the fundamental matrices $W(x)=[\mathbf{w}_k(x)]_{k=0,\ldots,N}$ and $H^{(\alpha)}(x)=[\mathbf{H}_q^{(\alpha)}(x)]_{q=0,\ldots,N}$. Since the system (\ref{ee}) has dimension $N+1$, and $\{\mathbf{w}_k(x)\}_{k=0,\ldots,N}$ and $\{\mathbf{H}_q(x)\}_{q=0,\ldots,N}$ are bases, there exists a connection matrix $C=[c_{ij}]_{i,j=0,\ldots,N}$ independent of $x$ such that
\begin{equation}\label{HWS}
H^{(\alpha)}(x)=W(x)C.
\end{equation} 
We are particularly interested in the integrals in the first row of $H^{(\alpha)}(x),$ $J_{0,q}^{(\alpha)}=I_q^{(\alpha)}(x)$, for which (\ref{HWS}) gives

\begin{equation}\label{HWS1}
[I_q^{(\alpha)}(x)]_{q=0,\ldots,N}=[(\mathbf{w}_k(x))_0]_{k=0,\ldots,N}\hspace{0.1cm}C
\end{equation}

With the power series of $(\mathbf{w}_k(x))_0$ calculated according to the strategy of the previous section, knowledge of $C$ will enable us to compute the power series of $I_q^{(\alpha)}(x)$. Conversely, knowledge of the small $x$ expansion of $I_q^{(\alpha)}(x)$, deduced in certain restricted regions of parameter space, can be used to compute the entries of $C$. For this purpose use will be made of the Selberg integral (\ref{2a}), as well as a variant of the Selberg integral due to Dotsenko and Fateev \cite{DF85}
\begin{align}
S_{(p,N-p)}(\lambda_1,\lambda_2,\lambda):=&\int_0^1dt_1\ldots \int_0^1dt_p\int_1^\infty dt_{p+1}\ldots\int_1^\infty dt_N \notag \\
&\times \prod_{l=1}^Nt_l^{\lambda_1}|t_l-1|^{\lambda_2}\prod_{1\leq j<k\leq N}|t_k-t_j|^{2\lambda}.\label{SDF}
\end{align}
As written, this converges for $|\lambda|\ll1$ and $-1<{\rm Re}(\lambda_1)\ll0$, $-1<{\rm Re}(\lambda_2)\ll0$ and is defined by analytic continuation outside this region.
According to \cite[Prop. 4.5.1]{Fo10} these integrals satisfy the first order recurrence
\begin{align*}
S_{(p,N-p)}(\lambda_1,\lambda_2,\lambda)=&\frac{p}{N-p+1}\frac{\sin \pi( N-p+1)\lambda \sin \pi(\lambda_1+\lambda_2+2+(N+p-2)\lambda)}{\sin\pi p \lambda \sin\pi(\lambda_1+1+(p-1)\lambda)}\\
&\hspace{1cm}\times S_{(p-1,N-p+1)}(\lambda_1,\lambda_2,\lambda),
\end{align*}
and it follows from this that (\ref{SDF}) can be expressed in terms of the Selberg integral according to 
\begin{align}
S_{(p,N-p)}(\lambda_1,\lambda_2,\lambda)=&S_N(\lambda_1,\lambda_2,\lambda)\prod_{j=1}^{N-p}\frac{j}{N-j+1} \notag \\
&\hspace{1cm} \times \frac{\sin \pi (N-j+1)\lambda \sin \pi (\lambda_1+1+(N-j)\lambda)}{\sin \pi j \lambda \sin \pi (\lambda_1+\lambda_2+2+(2N-j-1)\lambda)} \label{29a}.
\end{align}
\begin{proposition}\label{prop2}
Let $\lambda_1,\lambda_2,\alpha,\lambda$ be constrained according to (\ref{ea1}) so that (\ref{ea}) is well defined. For 
\begin{equation}\label{r1}
{\rm Re} (\lambda_1+\alpha+\lambda(q-1) )>0 
\end{equation}
we have
\begin{equation}\label{ep1}
I_q^{(\alpha)}(x)\mathop{\sim}_{x\rightarrow0^+}x^{\sigma_q}S_q(\lambda_1,\alpha-1,\lambda)S_{N-q}(\lambda_1+\alpha-1+2q\lambda,\lambda_2,\lambda)
\end{equation}
while for
\begin{equation}\label{r2}
-2l {\rm Re}\lambda<{\rm Re}(\lambda_1+\alpha)<-2(l-1){\rm Re}\lambda,\hspace{2cm} l>q
\end{equation}
we have
\begin{align}
I_q^{(\alpha)}(x)\mathop{\sim}_{x \rightarrow 0^+}&x^{\sigma_l}\binom{N-q}{l-q} S_{N-l}(\lambda_1+\alpha-1+2l\lambda,\lambda_2,\lambda) \notag \\
&\hspace{1cm} \times S_{(q,l-q)}(\lambda_1,\alpha-1,\lambda).\label{ep2}
\end{align}
\end{proposition}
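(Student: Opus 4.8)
The plan is to extract the $x \to 0^+$ leading behaviour of $I_q^{(\alpha)}(x)$ directly from the integral representation \eqref{ea}, isolating the term whose exponent of $x$ matches the relevant eigenvalue $\sigma_l$ of $Y^+$, and then to recognize the remaining constant factor as a product of Selberg and Dotsenko--Fateev integrals. First I would change variables in \eqref{ea} by scaling the first $q$ variables $t_1,\dots,t_q$ (those constrained to $[0,x]$) as $t_j = x s_j$ with $s_j \in [0,1]$, and leaving $t_{q+1},\dots,t_N \in [x,1]$ as they are. This produces an overall power of $x$ from the Jacobian ($x^q$), from the factors $t_l^{\lambda_1}$ for the scaled variables ($x^{q\lambda_1}$), from $|x - t_l|^{\alpha-1}$ for the scaled variables ($x^{q(\alpha-1)}$), and from the Vandermonde interactions $\prod_{1\le j<k\le q}|t_k - t_j|^{2\lambda}$ among the scaled variables ($x^{2\lambda \binom q 2} = x^{q(q-1)\lambda}$). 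Collecting these gives $x$ to the power $q + q\lambda_1 + q(\alpha - 1) + q(q-1)\lambda = q(\lambda_1 + \alpha + (q-1)\lambda) = \sigma_q$, using \eqref{19a}. The condition \eqref{r1} is exactly what makes the resulting $s$-integral over $[0,1]^q$ convergent at the endpoints, so that the leading coefficient is a genuine nonzero constant rather than something that itself vanishes or diverges as $x\to 0$.

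Next, in that leading term I would set $x = 0$ inside the now-regular integrand: the scaled integral over $s_1,\dots,s_q \in [0,1]$ factorizes. For the scaled block, $|x - xs_j|^{\alpha-1} = x^{\alpha-1}|1 - s_j|^{\alpha-1}$ and $(1 - xs_j)^{\lambda_2} \to 1$, and the cross terms $|t_k - xs_j|^{2\lambda} \to t_k^{2\lambda}$ couple into the unscaled block. The $s$-integral is then $\int_{[0,1]^q}\prod_{j=1}^q s_j^{\lambda_1}(1-s_j)^{\alpha-1}\prod_{1\le i<j\le q}|s_j - s_i|^{2\lambda}\,ds = S_q(\lambda_1,\alpha-1,\lambda)$. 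The unscaled block, after absorbing the $\prod t_k^{2q\lambda}$ from the cross terms, becomes $\binom N q \int_{[0,1]^{N-q}}\prod_{l=q+1}^N t_l^{\lambda_1 + \alpha - 1 + 2q\lambda}(1-t_l)^{\lambda_2}\prod|t_k - t_j|^{2\lambda}$, where I have used that for these variables $|x - t_l|^{\alpha-1} \to t_l^{\alpha-1}$ as $x\to 0$; this is $\binom N q S_{N-q}(\lambda_1 + \alpha - 1 + 2q\lambda,\lambda_2,\lambda)$ up to the binomial, and combining $\binom N q$ with the $1/\binom N q$ normalization implicit... actually \eqref{ea} already carries $\binom N q$ explicitly, so one must check the combinatorial bookkeeping carefully --- the $\binom N q$ in \eqref{ea} is precisely what upgrades the ordered-region integral to the full symmetric Selberg-type integral over $[0,1]^{N-q}$, and there is a compensating factor when the $q$ scaled variables are themselves symmetrized; I would track this to land exactly on \eqref{ep1} with no stray binomial.

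For the second regime \eqref{r2}, the condition \eqref{r1} fails: the naive $s$-integral over $[0,1]^q$ diverges at $s_j \to 1$, i.e. near $t_j \to x$. This signals that the dominant contribution comes not from $q$ variables being of order $x$ but from $l$ variables clustering near $x$ --- specifically $q$ of them below $x$ and $l - q$ of them just above $x$. So I would instead rescale $l$ variables: $t_j - x = x u_j$ for $j = 1,\dots,q$ (with $u_j \in [-1,0]$, equivalently $t_j = x(1 + u_j)$, $1 + u_j \in [0,1]$) and $t_j - x = x u_j$ for $j = q+1,\dots,l$ (with $u_j \in [0, (1-x)/x]$, which as $x\to 0$ becomes $[0,\infty)$). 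The Jacobian and factor count now gives $x$ to the power $l(\lambda_1 + \alpha) + l(l-1)\lambda - l + \ell$-type corrections; the exponent works out to $\sigma_l$ after using \eqref{19a} and noting that the $t_j^{\lambda_1}$ factors for the variables above $x$ also contribute since $t_j = x(1+u_j)$. Setting $x = 0$ in the regularized integrand, the rescaled block becomes an integral over $(1+u_1),\dots,(1+u_q) \in [0,1]$ and $u_{q+1},\dots,u_l \in [1,\infty)$ --- precisely the Dotsenko--Fateev region --- yielding $S_{(q,l-q)}(\lambda_1,\alpha-1,\lambda)$ after the substitution $v_j = 1 + u_j$, while the remaining $N - l$ unscaled variables give $S_{N-l}(\lambda_1 + \alpha - 1 + 2l\lambda, \lambda_2, \lambda)$, with the multinomial count producing $\binom{N-q}{l-q}$ (choosing which of the remaining $N-q$ upper variables join the cluster above $x$). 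The inequality \eqref{r2} is exactly the convergence window for the $v$- and $u$-integrals at both endpoints, guaranteeing the leading coefficient is finite and nonzero.

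The main obstacle I expect is rigorously justifying the interchange of limit and integral --- i.e., that the rescaled integrand, after extracting the power of $x$, is dominated uniformly in $x$ near $0$ by an integrable function, so that dominated convergence applies and no other scaling regime contributes at the same order. In the first case this is clean because \eqref{r1} gives genuine integrability at $s_j = 1$; in the second case one must check that the ``overlap'' regions (where some clustered variable is of intermediate scale, neither $O(x)$ nor $O(1)$) contribute strictly lower-order powers of $x$, which is where the strict inequalities in \eqref{r2} do the work and where a careful power-counting argument --- splitting the domain dyadically in the size of $\max_j|t_j - x|$ --- is needed. The combinatorial bookkeeping of the binomial factors in matching \eqref{ea} to the symmetrized Selberg/Dotsenko--Fateev integrals is routine but must be done with care to avoid off-by-a-binomial errors.
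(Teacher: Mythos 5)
Your proof is correct and follows essentially the same route as the paper: rescale the $q$ variables in $[0,x]$ by $t_j\mapsto xt_j$ to extract $x^{\sigma_q}$ times a product of Selberg integrals, and in the second regime rescale $l$ variables (your substitution $t_j=x(1+u_j)$ becomes, after $v_j=1+u_j$, exactly the paper's $t_j\mapsto xt_j$ applied to the $q$ lower variables together with $l-q$ of the upper ones), producing the Dotsenko--Fateev integral $S_{(q,l-q)}$, the factor $\binom{N-q}{l-q}$, and the remaining Selberg integral. Your added attention to dominated convergence, to the interpretation of the inequalities as both convergence windows and exponent-ordering conditions, and to the overall $\binom{N}{q}$ bookkeeping goes slightly beyond what the paper records, and is warranted.
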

\begin{proof}
Let $\sigma(x):=x(\lambda_1+\alpha+\lambda(x-1))$ so that $\sigma(k)=\sigma_k$, $k=0,1,\ldots,N$. The significance of (\ref{r1}) is that for this range of parameters (assumed real)
\begin{equation}\label{1q}
\sigma_q>\sigma_j \hspace{1cm} (j=0,\ldots,q-1),\hspace{1cm}\sigma_q<\sigma_j \hspace{1cm}(j=q+1,\ldots,N).
\end{equation}
To see this, we note from the form of $\sigma(x)$ that (\ref{1q}) is equivalent to requiring $\sigma_q>0$ and thus $\lambda_1+\alpha+\lambda(q-1)>0$.
In regards to (\ref{ep1}), for $j=1,\ldots,q$ we change variables $t_j\mapsto xt_j$ to obtain
\begin{align*}
I_q^{(\alpha)}(x)=&x^{\sigma_q}\binom{N}{q}\int_0^1dt_1\ldots \int_0^1dt_q\prod_{l=1}^qt_l^{\lambda_1}(1-xt_l)^{\lambda_2}|1-t_l|^{\alpha-1}\prod_{1\leq j<k\leq q}|t_k-t_j|^{2\lambda}\\
&\times \int_x^1dt_{q+1}\ldots\int_x^1dt_N\prod_{l=q+1}^Nt_l^{\lambda_1}(1-t_l)^{\lambda_2}|x-t_l|^{\alpha-1}\prod_{q+1\leq j<k \leq N}|t_k-t_j|^{2\lambda} \\
&\times\prod_{j=1}^q\prod_{k=q+1}^N|xt_j-t_k|^{2\lambda}.
\end{align*}
Setting $x=0$ in the integrands and comparing with (\ref{2a}) gives (\ref{ep1}).

The significance of (\ref{r2}) is that with real parameters
\begin{equation}\label{ss}
\sigma_l<\sigma_j \hspace{1cm} (j=1,\ldots, N) \hspace{1cm} j \not= l.
\end{equation}
To see this we note that for (\ref{ss}) to hold we require
\begin{equation}\label{ts}
l-\frac{1}{2}<x^*<l+\frac{1}{2}
\end{equation}
where $x^*$ is the maximum of $\sigma(x)$ and so 
\begin{equation}\label{us}
x^*=-\frac{\lambda_1+\alpha-\lambda}{2\lambda}.
\end{equation}
Substituting (\ref{us}) in (\ref{ts}) gives (\ref{r2}).
In regards to (\ref{ep2}), the same strategy is employed, except that we rescale the first $q$ coordinates $t_j \mapsto xt_j$ together with $l-q$ of the next $N-q$ coordinates, (this choice accounts for the combinatorial factor in (\ref{ep2})), and compare the resulting integrals with $x=0$ to both (\ref{2a}) and (\ref{SDF}).
\hfill $\square$ \end{proof}

Our strategy to use knowledge of the above asymptotic behaviour to compute the entries of $C$ will require restricting attention to regions of parameter space as specified by inequalities (\ref{r1}) and (\ref{r2}). To then justify that our expressions are valid generally, we require knowledge of the anaalytic structure of the entries of $C$ as a function of $\lambda_1$ in the complex plane.

\begin{proposition}\label{prop2b}
Both matrices of solutions in (\ref{HWS1}) are  meromorphic functions of $\lambda_1$ and so consequently is the connection matrix $C$.
\end{proposition}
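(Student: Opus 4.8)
The plan is to establish meromorphy of each of the two fundamental matrices separately, and then deduce the result for $C$ by the formula $C = W(x)^{-1} H^{(\alpha)}(x)$ (valid since $\det W(x) \neq 0$, cf.\ Remark 4 and the nonvanishing of $J(x)$). For the integral side, I would argue that for parameters in the open region \eqref{ea1} the integral \eqref{eb} defining $J_{p,q}^{(\alpha)}(x)$ is absolutely convergent and, by dominated convergence, depends analytically on $\lambda_1$ there; the standard device of splitting each one-dimensional integration near its endpoint singularities $t_l = 0$ and peeling off the leading powers via integration by parts (equivalently, invoking the known meromorphic continuation of Selberg-type integrals, e.g.\ through the explicit evaluations \eqref{2a} and \eqref{29a} and the recurrences they satisfy) shows that $J_{p,q}^{(\alpha)}(x)$, and hence $H_{p,q}^{(\alpha)}(x) = (x-1)^{-p} J_{p,q}^{(\alpha)}(x)$, extends to a meromorphic function of $\lambda_1 \in \mathbb{C}$ with poles only at the locations forced by the Gamma factors. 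Thus $H^{(\alpha)}(x)$ is meromorphic in $\lambda_1$ entrywise, for each fixed $x \in (0,1)$.

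For the power-series side, I would use the construction in Section 2.2. The entries $\sigma_k$ of $Y^+$ are the values $\sigma(k) = k(\lambda_1 + \lambda(k-1) + \alpha)$, which are polynomials in $\lambda_1$; the matrices $Y^{\pm}$ depend polynomially on $\lambda_1$. When the eigenvalues $\sigma_0,\dots,\sigma_N$ are distinct modulo integers (the generic situation in $\lambda_1$), the recurrence \eqref{26a} expresses each coefficient vector $\mathbf{p}_l$ as $((\sigma_k + l)\mathbb{I} - A_0)^{-1}$ applied to a polynomial expression in the earlier coefficients, so $\mathbf{p}_l$ is a rational function of $\lambda_1$ whose denominators are products of the factors $\sigma_j - \sigma_k - l$. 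Hence each solution $\mathbf{w}_k(x) = x^{\sigma_k}\sum_l \mathbf{p}_l x^l$ is, coefficientwise, meromorphic in $\lambda_1$; the prefactor $x^{\sigma_k}$ is entire in $\lambda_1$ for fixed $x$. One must check that summing the series preserves meromorphy — i.e.\ that for $\lambda_1$ in any compact set avoiding the (locally finite) bad set, the series converges uniformly in $x$ on compacts of $(0,1)$ — which follows from the general convergence theory of \cite[Section 24]{Wa98a} applied with the parameter dependence made uniform. The apparent poles of $W(x)$ at resonant $\lambda_1$ are, per the discussion following \eqref{89}, either genuine or removable, but in either case the inverse $W(x)^{-1}$ is meromorphic since $\det W(x)$ is (by Remark 4) an explicit monomial $x^{\mathrm{Tr}\,Y^+}(1-x)^{-\mathrm{Tr}\,Y^-}$ times a constant, hence nonzero and meromorphic in $\lambda_1$.

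Combining, $C = W(x)^{-1} H^{(\alpha)}(x)$ is a product of matrices each meromorphic in $\lambda_1$ (for a fixed generic $x$), and since $C$ is independent of $x$ the conclusion follows. I expect the main obstacle to be the uniformity of the power-series convergence: one needs that the Frobenius radius of convergence (here controlled by the singularity at $x=1$, so $=1$) does not degenerate as $\lambda_1$ varies over a compact set away from resonances, and that near a resonance the possible $\log x$ terms and the choice of basis $\{\mathbf{w}_k\}$ can be made to vary meromorphically — handled by the perturbation trick already described at the end of Section 2.2, taking a limit as the auxiliary shift in $\lambda_1$ goes to zero. A secondary point requiring care is that the region \eqref{ea1} is only a half-space in $\lambda_1$, so the initial analyticity of $H^{(\alpha)}(x)$ must be propagated to all of $\mathbb{C}$ by an explicit continuation argument rather than taken for granted; the cleanest route is to note that each $J_{p,q}^{(\alpha)}$ satisfies \eqref{ee} with coefficients rational in $\lambda_1$ and polynomial in $x$, so its meromorphic continuation in $\lambda_1$ is automatic once one has it on an open subset together with a locally bounded family of initial data, the latter coming from Proposition \ref{prop2}.
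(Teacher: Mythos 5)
Your treatment of the Frobenius side matches the paper's: the coefficients produced by (\ref{26a})/(\ref{89}) are rational in $\lambda_1$ with poles where $(\sigma_k+l)\mathbb{I}-A_0$ degenerates, and the series is then meromorphic inside its radius of convergence $1$. The gap is on the integral side, and it is exactly the point the proposition exists to settle: continuing $J_{p,q}^{(\alpha)}(x)$ in $\lambda_1$ beyond the half-space ${\rm Re}\,\lambda_1>-1$ of (\ref{ea1}). The devices you invoke do not accomplish this. The explicit evaluations (\ref{2a}) and (\ref{29a}) concern the Selberg and Dotsenko--Fateev integrals, not (\ref{eb}), which carries the extra factor $\prod_l|x-t_l|^{\alpha-1}$ and the split domains $[0,x]^q\times[x,1]^{N-q}$, so no closed form is available to continue. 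The ``peel off the endpoint singularity by integration by parts'' device is a one-variable argument; here the $\lambda_1$-poles arise from corners where several $t_l$ tend to $0$ simultaneously, coupled through $\prod|t_k-t_j|^{2\lambda}$, and handling that requires a blow-up/sector decomposition you have not supplied. Finally, your fallback --- propagating meromorphy via the differential equation (\ref{ee}) --- is circular: (\ref{ee}) is an equation in $x$ with the parameters fixed, so it can transport $\lambda_1$-meromorphy of the initial data from one value of $x$ to another, but it cannot enlarge the domain of $\lambda_1$ on which that initial data (which is $H^{(\alpha)}(x_0)$ itself) is defined; Proposition \ref{prop2} only gives $x\to0^+$ asymptotics in restricted parameter ranges, not data meromorphic on all of $\mathbb{C}$.

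The paper closes this gap with a contour argument you would need: confine $t_1,\dots,t_q$ to a loop $\gamma$ starting and ending at $x$ and encircling the origin. The resulting integral $J_{p,q}^{(\alpha)}(x;\gamma)$ avoids the singularity at $t=0$ entirely and is therefore an entire function of $\lambda_1$; collapsing $\gamma$ onto the real axis shows it equals $J_{p,q}^{(\alpha)}(x)$ multiplied by a finite sum of phases $e^{2\pi i(\cdot\lambda_1+\cdot\lambda)}$. Dividing by that (entire, not identically zero) trigonometric factor exhibits $J_{p,q}^{(\alpha)}(x)$ as meromorphic in $\lambda_1$, with poles confined to the zeros of the phase sum. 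Without this (or an equivalent genuine continuation of the integral in $\lambda_1$), your proof of the proposition is incomplete. Your remaining points --- $C=W(x)^{-1}H^{(\alpha)}(x)$ at a fixed generic $x$, and the uniformity of convergence of the Frobenius series over compacta in $\lambda_1$ --- are sound and in fact slightly more careful than the paper's own wording, but they are not where the difficulty lies.
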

\begin{proof}
We see from (\ref{89}) that each coefficient $\mathbf{p}_l$ in $\mathbf{w}_k(x)$ is a rational function of the parameters $\lambda_1,\lambda_2,\lambda,\alpha$ with poles at the zeros of $\det ((\sigma_k+l)\mathbf{I}-A_0)=0$. Furthermore, we know that $x^{-\sigma_k}\mathbf{w}_k(x)$ has radius of convergence $R=1$ about $x=0$. Hence inside this radius of convergence $x^{-\sigma_k}\mathbf{w}_k(x)$ is a meromorphic function of the parameters.

On the other hand, consider the integrals (\ref{eb}). As written they are analytic functions of the parameters in the region (\ref{ea1}). By suitable ordering of the integration variables
\begin{equation*}
0<t_1<\ldots<t_q<x<t_{q+1}<\ldots<t_N<1
\end{equation*}
say, and a convention for the meaning of the power functions, the integrand is an analytic function of the integration variables $\mathbf{t}$, and more particularly $t_1,\ldots,t_q$, in the region (\ref{ea1}).

Now let $\gamma$ be a simple contour starting at $x$ and looping once around the origin before returning to $x$. Suppose that the integration variables $t_1,\ldots,t_q$ are confined to $\gamma$ with some prescribed ordering. The resulting integral, $J_{p,q}^{(\alpha)}(x;\gamma)$ say, is analytic in $\lambda_1.$ Moreover, by deforming $\gamma$ to loop around the origin by running along the real axis, we see that $J_{p,q}^{(\alpha)}(x;\gamma)$ is equal to $J_{p,q}^{(\alpha)}(x)$ times a finite sum of certain phases, which are complex exponentials in $\lambda_1$ and $\lambda$. Consequently $J_{p,q}^{(\alpha)}(x)$ itself is a meromorphic function of $\lambda_1$, with poles at the zeros of the sum of phases.
\hfill $\square$ \end{proof}

The series solutions $(\mathbf{w}_q(x))_0$ are unique only up to a scalar factor. We choose the latter so that in the range (\ref{r1})
\begin{equation}\label{wI}
\lim_{x \rightarrow 0}(\mathbf{w}_q(x))_{(0)}/I_{q}^{(\alpha)}(x)=1.
\end{equation}
We can now proceed to use Propositions \ref{2.1} and \ref{prop2} to determine the entries of $C$.
\begin{proposition}\label{prop4}
With $C=[c_{k,q}]_{k,q=0,\ldots,N}$ we have
\begin{align}
c_{k,q}&=0,\hspace{10.3cm}k<q \label{C1}\\
c_{k,q}&=\prod_{j=1}^{k-q}\frac{\sin \pi (k+1-j) \lambda \sin \pi(\lambda_1+(k-j)\lambda)}{\sin \pi j \lambda \sin \pi (\lambda_1+\alpha+(2k-j-1)\lambda)}, \hspace{3.2cm} k\geq q\label{C2}.
\end{align}
In particular, for $k\geq q$ $c_{k,q}$ is periodic under $\lambda_1\mapsto \lambda_1+1$ or $\lambda \mapsto \lambda+1$ and is multiplied by $(-1)^{k-q}$ under $\alpha \mapsto \alpha+1$. Furthermore, for $\lambda$ an integer
\begin{equation} \label{9a}
c_{k,q}=c_{k,q}\Big |_{\lambda=0}=\binom{k}{q}\prod_{j=1}^{k-q}\frac{\sin \pi \lambda_1}{\sin \pi (\lambda_1+\alpha)}
\end{equation}
while for $\lambda=r/s$, $r$ and $s$ relatively prime positive integers
\begin{equation} \label{9b}
c_{k,q}\Big |_{\lambda=r/s}=c_{k\hspace{0.05cm}{\rm mod}\hspace{0.05cm}s,q\hspace{0.05cm}{\rm mod} s}\Big |_{\lambda=r/s} c_{\lfloor k/s \rfloor,\lfloor q/s \rfloor}\Big |_{\substack{\lambda_1 \mapsto s\lambda_1,\alpha\mapsto s \alpha\\ \lambda=0}}. 
\end{equation}
\end{proposition}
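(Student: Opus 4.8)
The plan is to recover every entry of $C$ from the $x\to0^+$ asymptotics of Proposition~\ref{prop2}, together with the normalization (\ref{wI}) and the meromorphy in $\lambda_1$ of Proposition~\ref{prop2b}, and then to read the periodicity and specialization statements off the product (\ref{C2}) itself. Fix generic real parameters in the range (\ref{r1}) for the index $q$, so that by (\ref{1q}) one has $\sigma_j<\sigma_q$ exactly when $j<q$. The $q$-th entry of (\ref{HWS1}) reads $I_q^{(\alpha)}(x)=\sum_k(\mathbf{w}_k(x))_0\,c_{k,q}$, where by (\ref{ep1}) $I_q^{(\alpha)}(x)\sim x^{\sigma_q}$ times a generically nonzero product of Selberg integrals, and each $(\mathbf{w}_k(x))_0$ equals $x^{\sigma_k}$ times a leading coefficient which, by the index-$k$ instance of (\ref{wI}) and by (\ref{ep1}), is likewise generically nonzero. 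Since the $\sigma_j$ are distinct for generic parameters, the lowest power of $x$ that actually occurs in the sum has a nonzero coefficient; were some $c_{j,q}$ with $j<q$ nonzero, this power would be strictly below $\sigma_q$, contradicting (\ref{ep1}). Hence $c_{k,q}=0$ for $k<q$ in this range, and by Proposition~\ref{prop2b} identically, giving (\ref{C1}); matching the coefficients of $x^{\sigma_q}$ then forces $c_{q,q}=1$, the empty-product case of (\ref{C2}).

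For $k>q$ I would pass to the range (\ref{r2}) with $l=k$, where by (\ref{ss}) $\sigma_k$ is strictly the smallest exponent, so (using $c_{j,q}=0$ for $j<q$) the leading term of $I_q^{(\alpha)}(x)$ is $(\mathbf{w}_k(x))_0\,c_{k,q}$; therefore $c_{k,q}$ is the leading coefficient of $I_q^{(\alpha)}(x)$ from (\ref{ep2}) --- carrying the prefactor $\binom{N}{q}\binom{N-q}{k-q}$ --- divided by that of $I_k^{(\alpha)}(x)$ from (\ref{ep1}) --- carrying the prefactor $\binom{N}{k}$. The factors $S_{N-k}(\lambda_1+\alpha-1+2k\lambda,\lambda_2,\lambda)$ cancel, $\binom{N}{q}\binom{N-q}{k-q}/\binom{N}{k}=\binom{k}{q}$, and one is left with $c_{k,q}=\binom{k}{q}\,S_{(q,k-q)}(\lambda_1,\alpha-1,\lambda)/S_k(\lambda_1,\alpha-1,\lambda)$. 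Inserting (\ref{29a}) with $N\mapsto k$, $p\mapsto q$, $\lambda_2\mapsto\alpha-1$, and cancelling $\binom{k}{q}$ against $\prod_{j=1}^{k-q}j/(k-j+1)=1/\binom{k}{q}$, produces a product coinciding with (\ref{C2}) except that $\lambda_1+(k-j)\lambda$ and $\lambda_1+\alpha+(2k-j-1)\lambda$ each appear with an added $+1$; since $\sin\pi(\theta+1)=-\sin\pi\theta$, the two extra signs cancel in each of the $k-q$ factors, so (\ref{C2}) holds for generic parameters and hence, by Proposition~\ref{prop2b}, identically in $\lambda_1$.

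The remaining assertions follow from (\ref{C2}). Invariance under $\lambda_1\mapsto\lambda_1+1$ is immediate from $\sin\pi(\theta+1)=-\sin\pi\theta$, applied to the single $\lambda_1$-dependent sine in the numerator and the single one in the denominator of each factor; invariance under $\lambda\mapsto\lambda+1$ holds because each sine picks up $(-1)$ raised to its integer coefficient of $\lambda$, and those four exponents sum to an even integer within every factor; and $\alpha\mapsto\alpha+1$ reverses the sign of the one $\alpha$-dependent sine per factor, hence multiplies $c_{k,q}$ by $(-1)^{k-q}$. For $\lambda\to0$ one has $\sin\pi(k+1-j)\lambda/\sin\pi j\lambda\to(k+1-j)/j$, and $\prod_{j=1}^{k-q}(k+1-j)/j=\binom{k}{q}$ gives (\ref{9a}), which extends to all integer $\lambda$ by the periodicity just shown. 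For $\lambda=r/s$ with $\gcd(r,s)=1$ I would group the $k-q$ factors of (\ref{C2}) into complete blocks of $s$ consecutive $j$'s together with one shorter block: on each complete block the quantities $jr/s$ and $(k+1-j)r/s$ run through a full residue system modulo $1$ (this is where $\gcd(r,s)=1$ enters), so the identity $\prod_{m=0}^{s-1}\sin\pi(\theta+m/s)=2^{1-s}\sin\pi s\theta$ collapses it, and the $\binom{\lfloor k/s\rfloor}{\lfloor q/s\rfloor}$ collapsed contributions assemble, via (\ref{9a}), into $c_{\lfloor k/s\rfloor,\lfloor q/s\rfloor}|_{\lambda_1\mapsto s\lambda_1,\,\alpha\mapsto s\alpha,\,\lambda=0}$, while the shorter block reproduces $c_{k\bmod s,\,q\bmod s}|_{\lambda=r/s}$ --- which vanishes, consistently with (\ref{C1}), precisely when $k\bmod s<q\bmod s$.

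I expect this last factorization to be the main obstacle. At $\lambda=r/s$ several sines in (\ref{C2}) vanish (those in the numerator when $j\equiv k+1$, those in the denominator when $j\equiv0$ modulo $s$), so (\ref{9b}) must be obtained as the limit $\lambda\to r/s$ --- conveniently taken after a small perturbation of $\lambda_1$, as in the remark following (\ref{26a}) --- and one must check that the paired zeros and poles cancel correctly and that the powers of $2^{1-s}$ together with the signs generated by $r$ reorganize exactly into the two advertised pieces. The remaining steps reduce, once Propositions~\ref{prop2} and \ref{prop2b} are available, to dominant-balance arguments and routine trigonometric identities.
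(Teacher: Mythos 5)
Your argument for (\ref{C1}) and (\ref{C2}) is exactly the paper's: kill the $k<q$ entries and fix $c_{q,q}=1$ by dominant balance in the region (\ref{r1}), extract $c_{k,q}$ for $k>q$ as the ratio of leading coefficients in the region (\ref{r2}) via (\ref{ep1}), (\ref{ep2}) and (\ref{29a}) (with the same binomial and $\sin\pi(\theta+1)=-\sin\pi\theta$ cancellations), and extend to general parameters by the meromorphy of Proposition \ref{prop2b}. The periodicity claims and the specializations (\ref{9a}), (\ref{9b}) are simply read off from (\ref{C2}) in the paper with no written proof, so your explicit sign-counting and the (honestly flagged, incomplete) block-of-$s$ analysis for (\ref{9b}) go somewhat beyond what the paper records, and are consistent with it.
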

\begin{proof}
Suppose the parameters are such that (\ref{r1}) is valid. Then from (\ref{1q}) $\sigma_k<\sigma_q$ for $k=0,1,\ldots,q-1$. Taking the limit $x \rightarrow 0$ in the connection formula
\begin{equation}\label{CNF}
\sum_{k=0}^{N}(\mathbf{w}_k(x))_0c_{k,q}=I_q^{(\alpha)}(x)
\end{equation}
it follows that (\ref{C1}) must hold, and using too (\ref{wI}), that
\begin{equation*}
c_{q,q}=1.
\end{equation*}
Suppose now the parameters are such that (\ref{r2}) is valid. Then (\ref{ss}) holds. Taking $x\rightarrow 0$ in (\ref{CNF}) gives that the LHS is dominated by the term $k=l$ and thus
\begin{equation}\label{51}
c_{l,q}=\lim_{x\rightarrow0}\frac{I_q^{(\alpha)}(x)}{(\mathbf{w}_{l}(x))_0},\hspace{2cm l>q}.
\end{equation}
But according to (\ref{wI}) the coefficient of $x^{\sigma_{l}}$ in $(\mathbf{w}_{l}(x))_0$ is the same as the coefficient of $x^{\sigma_{l}}$ in $I_l^{(\alpha)}(x)$ in the range (\ref{r1}). Thus, using (\ref{ep2}) in the numerator and (\ref{ep1}) in the denominator, it follows from (\ref{51}) that for $l>q$ 
\begin{equation*}
c_{l,q}=\frac{\binom{N-q}{l-q}\binom{N}{q}}{\binom{N}{l}}\frac{S_{(q,l-q)}(\lambda_1,\alpha-1,\lambda)}{S_l(\lambda,\alpha-1,\lambda)}.
\end{equation*} 
Now using (\ref{29a}) gives (\ref{C2}).
\hfill $\square$  \end{proof}

Let $(\mathbf{w}_{k}(x))_0^*$ denote $(\mathbf{w}_{k}(x))_0$ normalized so that the coefficient of $x^{\sigma_k}$ is unity, and thus
\begin{equation}\label{44'}
(\mathbf{w}_k(x))_0^*=x^{\sigma_k}(1+\sum_{l=1}^\infty p_{l.k}x^k)
\end{equation}
where the coefficients $p_{l,k}$ are independent of $x$ (the structure (\ref{44'}) holds for the generic solution if no $\log$ terms are present; as will be seen in Section 3.2 cases with $\log$ terms can be obtained as limits of the generic case). It then follows from (\ref{wI}), (\ref{ep1}), (\ref{CNF}) and (\ref{C1}) that 
\begin{equation}
I_q^{(\alpha)}(x) =\sum_{k=q}^{N}(\mathbf{w}_{k}(x))_0^*S_{k}(\lambda_1,\alpha-1,\lambda)S_{N-k}(\lambda_1+\alpha-1+2k\lambda,\lambda_2,\lambda) \binom{N}{k}c_{k,q}\label{ISC}.
\end{equation}

\noindent
{\bf Remark.}

Let
\begin{align*}
K_{p,q}^{(\alpha)}(x)=&(x-1)^p\int_{-\infty}^0dt_1\ldots \int_{-\infty}^0dt_q\int_x^1dt_{q+1}\ldots \int_x^1dt_N \\
&\times F(t_1,\ldots,t_N;\alpha;x)e_p(t_1-x,\ldots,t_N-x)
\end{align*}
(cf. (\ref{eb}) and (\ref{JH})), and write $K^{(\alpha)}(x)=[K_{p,q}^{\alpha}(x)]_{p,q=0,\ldots,N}$. In \cite{Mi07} the connection matrix $C^{(0,1)}$ defined so that
\begin{equation*}
H^{(\alpha)}(x)=K^{(\alpha)}(x)C^{(0,1)}
\end{equation*}
has been computed.

\subsection{Action of the monodromy around $0$}
A basic question relating to the solution matrix of integrals $H^{(\alpha)}(x)$ is how it transforms upon a circuit about the origin, $x\mapsto xe^{2\pi i}$, or a circuit about the other singular points of the Fuchsian equation (\ref{JH}), namely $x=1$ and $x=\infty$. Consider first a circuit about the origin. Thus we seek a matrix $M_0$ --- referred to as the monodromy matrix --- such that
\begin{equation} \label{JM}
H^{(\alpha)}(x)\bigg|_{x\mapsto xe^{2\pi i}}=H^{(\alpha)}(x)M_0. 
\end{equation}

\begin{proposition}
Suppose the series solutions (\ref{WA1}) have no $\log$ terms. The monodromy matrix $M_0$ in (\ref{JM}) is given in terms of the connection matrix $C$ in (\ref{HWS}) by
\begin{equation} \label{JM1}
M_0=C^{-1}DC
\end{equation}
where
\begin{equation}\label{D}
D={\rm diag}[e^{2\pi i q (\lambda_1+ \alpha -1 +(q-1)\lambda)}]_{q=0,\ldots,N}.
\end{equation}
\end{proposition}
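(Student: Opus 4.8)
The plan is to exploit the factorization $H^{(\alpha)}(x)=W(x)C$ from (\ref{HWS}) and to simply track how each side transforms under the circuit $x\mapsto xe^{2\pi i}$ about the origin. Since $C$ is independent of $x$, the right-hand side transforms as $W(x)\big|_{x\mapsto xe^{2\pi i}}\,C$, while by the defining relation (\ref{JM}) the left-hand side transforms as $H^{(\alpha)}(x)M_0$. Thus the whole problem reduces to computing the monodromy around $0$ of the Frobenius fundamental matrix $W(x)$.

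First I would invoke the hypothesis that the series solutions (\ref{WA1}) have no $\log$ terms: then each column is $\mathbf{w}_k(x)=x^{\sigma_k}\sum_{l\ge 0}\mathbf{p}_l x^l$ with the $\mathbf{p}_l$ constant vectors, i.e. $\mathbf{w}_k(x)=x^{\sigma_k}\,\mathbf{u}_k(x)$ with $\mathbf{u}_k$ analytic, hence single-valued, in the punctured unit disc. Under $x\mapsto xe^{2\pi i}$ only the prefactor changes, acquiring the scalar $e^{2\pi i\sigma_k}$, so $W(x)\big|_{x\mapsto xe^{2\pi i}}=W(x)D$ with $D=\mathrm{diag}[e^{2\pi i\sigma_k}]_{k=0,\ldots,N}$. (Had $\log$ terms been present, $\log x\mapsto\log x+2\pi i$ would contribute additional unipotent pieces within each block of resonant exponents and $D$ would have to be replaced by $e^{2\pi i B}$; this is the only place the no-$\log$ assumption is used.)

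Substituting into $H^{(\alpha)}(x)=W(x)C$ gives $H^{(\alpha)}(x)M_0=W(x)DC$ and also $H^{(\alpha)}(x)M_0=W(x)CM_0$; since $W(x)$ is a fundamental matrix it is invertible, so cancelling it on the left yields $CM_0=DC$, that is $M_0=C^{-1}DC$. It only remains to rewrite the diagonal entries in the form (\ref{D}): from the expression $\sigma(x)=x(\lambda_1+\alpha+\lambda(x-1))$ used in the proof of Proposition \ref{prop2} we have $\sigma_q=q(\lambda_1+\alpha+\lambda(q-1))$, and since $q\in\mathbb{Z}$,
\[
e^{2\pi i\sigma_q}=e^{2\pi i(\sigma_q-q)}=e^{2\pi i q(\lambda_1+\alpha-1+(q-1)\lambda)},
\]
which is exactly (\ref{D}).

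I do not expect a genuine obstacle here: the argument is a routine application of the general theory of regular singular points recalled in Section 2.2, together with the already established relation $H^{(\alpha)}(x)=W(x)C$. The only point demanding care is the explicit verification that the absence of $\log$ terms is precisely what makes the Frobenius monodromy purely diagonal (rather than merely block upper triangular), so that $D$ takes the stated form; everything else is bookkeeping.
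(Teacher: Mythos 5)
Your proposal is correct and follows essentially the same route as the paper: the no-$\log$ hypothesis gives $W(x)\big|_{x\mapsto xe^{2\pi i}}=W(x)D$, and substituting into $H^{(\alpha)}(x)=W(x)C$ yields $M_0=C^{-1}DC$. Your extra step reconciling $e^{2\pi i\sigma_q}$ with the form of $D$ in (\ref{D}) via $e^{-2\pi iq}=1$ is a worthwhile clarification that the paper leaves implicit.
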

\begin{proof}
With $W(x)$ specified as below (\ref{28'}), and $\mathbf{p}_l(\log x)=\mathbf{p}_l$ (i.e. no $\log$ terms), we see that
\begin{equation*}
W(x)\bigg|_{x\mapsto x e^{2 \pi i}}=W(x)D.
\end{equation*}
Using this in (\ref{HWS}) shows
\begin{equation*}
H^{(\alpha)}(x)\bigg|_{x\mapsto x e^{2\pi i}}=W(x)DC=H^{(\alpha)}(x)C^{-1}DC.
\end{equation*}
Comparison with (\ref{JM}) gives (\ref{JM1}).
\hfill $\square$ \end{proof}

Consider now how the solution matrix of integrals $H^{(\alpha)}(x)$ transforms upon a circuit about $x=1$. Writing $y=1-x$, $\overline{H}^{(\alpha)}(y)=H^{(\alpha)}(x)$ as is consistent with the use in (\ref{19a}), we seek the monodromy matrix $M_1$ such that
\begin{equation}\label{JT}
\overline{H}^{(\alpha)}(y)\bigg|_{y\mapsto y e^{2\pi i}}=\overline{H}^{(\alpha)}(y)M_1.
\end{equation}
According to (\ref{Hq})
\begin{equation}\label{JS}
\overline{H}^{(\alpha)}(y)=H^{(\alpha)}(x)\bigg|_{\lambda_1 \leftrightarrow \lambda_2}\hspace{0.1cm} \overline{I}
\end{equation}
where $\overline{I}$ is the $(N+1)\times(N+1)$ matrix with $1$'s along its anti-diagonal, and $0$'s elsewhere. Hence
$$
\overline{H}^{(\alpha)}(y)\bigg|_{y\mapsto y e^{2\pi i}} =H^{(\alpha)}(x)\bigg|_{\substack{\lambda_1 \leftrightarrow \lambda_2 \\ x\mapsto xe^{2\pi i}}} \hspace{0.1cm} \overline{I} 
=\Big(H^{(\alpha)}(x)M_0 \Big)\bigg|_{\lambda_1 \leftrightarrow \lambda_2} \hspace{0.1cm} \overline{I}=\overline{H}^{(\alpha)}(y)\hspace{0.1cm}\Big(M_0 \bigg|_{\lambda_1 \leftrightarrow \lambda_2} \Big)\hspace{0.1cm} \overline{I} 
$$
where the second equality follows from (\ref{JM}) and the third from a second application of (\ref{JS}). Comparing with (\ref{JT}) we read off that
\begin{equation}\label{IMI}
M_1=\overline{I}\Big(M_0\bigg|_{\lambda_1 \leftrightarrow \lambda_2} \Big)  \overline{I} .
\end{equation}

For the circuit about $x=\infty$ we set $\widetilde{H}^{(\alpha)}(z):=H^{(\alpha)}(\frac{1}{z})$ and seek the monodromy matrix $M_\infty$ such that 
\begin{equation*}
\widetilde{H}^{(\alpha)}(z)\bigg|_{z\mapsto ze^{2 \pi i}}=\widetilde{H}^{(\alpha)}(z)M_{\infty}.
\end{equation*}
Here no further calculation is necessary due to the fundamental relationship 
\begin{equation}\label{IMII}
M_\infty M_0 M_1 = \mathbb{I}.
\end{equation}
In words this says that the matrix $H^{(\alpha)}(x)$ is single valued when $x$ is traced around a contour starting at $\frac{1}{2}-i \epsilon$ $(0<\epsilon\ll1)$ say, looping around the points $x=1,0$ then $\infty$ while not crossing the interval $[0,1]$ on the real axis and $[0,-i \infty)$ on the imaginary axis.

We have from (\ref{JM1}), (\ref{IMI}), and (\ref{IMII}) that the monodromy matrices are determined by the connection matrix $C$ and the diagonal matrix $D$  (\ref{D}). The latter is periodic under any of the shifts 
$\lambda_1\mapsto \lambda_1+1,$ $\alpha \mapsto \alpha+1$ or $\lambda \mapsto \lambda+1$, and we know from Proposition \ref{prop4} that $C$ has the same periodicity properties (with the qualification that each entry $c_{k,q}$ is multiplied by $(-1)^{k-q}$ under 
$\alpha \mapsto \alpha+1$). 
As a consequence $H^{(\alpha)}(z)\Big|_{\lambda_1\mapsto \lambda_1+1}$ (and similarly for $\alpha\mapsto \alpha+1$ or $\lambda \mapsto \lambda+1$) must satisfy a first order matrix difference equation
\begin{equation}\label{HAA}
H^{(\alpha)}\bigg|_{\lambda_1\mapsto\lambda_1+1}=H^{(\alpha)}(x)A
\end{equation}
for some $(N+1)\times(N+1)$ matrix $A$, dependent on the parameters $\lambda_1,\lambda_2,\alpha,N$ and analytic in $x$. In fact in the recent work \cite{FI10} the explicit form of the matrix $A$, by way of its Gauss decomposition, has been found. It turns out to be a polynomial in $x$. The working of \cite{FI10} also applies to the shift $\alpha \mapsto \alpha+1$, but not $\lambda \mapsto \lambda+1$ for which the explicit form of $A$ remains unknown.

It is a basic fact that successive differentiation of the first order matrix differential equation (\ref{ee}) can be used to obtain a scalar differential equation of degree $N+1$ for entries of the first row. This scalar differential equation then has $N+1$ linearly independent integral solutions $\{I_q^{(\alpha)}(x) \}_{q=0,\ldots,N}$ given by (\ref{ea}). In the case $\lambda=0$ we have
\begin{equation*}
I_q^{(\alpha)}(x)=\binom{N}{q}\Big(\int_0^x t^{\lambda_1}(1-t)^{\lambda_2}(x-t)^{\alpha-1}dt \Big)^q\Big(\int_x^1 t^{\lambda_1}(1-t)^{\lambda_2}(t-x)^{\alpha-1}dt \Big)^{N-q}.
\end{equation*}
In general the scalar differential operator $\mathcal{L}_{N+1}$ of degree $N+1$ that has  $\{ (f_1(x))^q(f_2(x))^{N-q}\}_{q=0,\ldots,N}$ as its basis, where $f_1$ and $f_2$ are the basis of the second order differential operator $\mathcal{L}_2$, is called the $N$-th symmetric power of $\mathcal{L}_2.$ Thus we have that the scalar differential operator corresponding to (\ref{ea}) with $\lambda=0$ is the $N$-th symmetric power of the scalar differential operator in the case $N=1$ (which is independent of $\lambda$). Due to the periodicity in $\lambda$, the connection matrix and monodromy must then correspond to this same $N$-th symmetric power for all $\lambda\in \mathbb{Z}_{\geq0}$.

Suppose $\lambda=m+1/2$, $m\in \mathbb{Z}_{\geq 0}$. For $N$ odd, we see from (\ref{9b}) that $C$ is the tensor product of its $N=1$ univariate form, with the connection matrix corresponding to the $(N-1)/2$-th
 symmetric power as specified above but with doubled parameters. For $N$ even, the connection matrix and thus the monodromy is
reducible. We have that $I_1^{(\alpha)}(x),I_3^{(\alpha)}(x),\ldots,I_{N-1}^{(\alpha)}(x)$ span an invariant subspace on which the monodromy acts as the $(N/2-1)$-th symmeric power of the univariate monodromy with double parameters, while on the quotient the monodromy acts as the $N/2$-th symmetric power of the univariate monodromy. As a consequence the scalar differential operator will then factorize. For example, with $\lambda=1/2$, $N=2$ we can check that the third order scalar differential operator can be written
\begin{equation*}
\hat{D}\Big(\frac{d}{dx}-\frac{\lambda_1+\alpha}{x}-\frac{\lambda_2+\alpha}{x-1}\Big)
\end{equation*}
where $\hat{D}$ is a second order operator. Here the first order operator has the solution
\begin{align}
x^{\lambda_1+\alpha}(1-x)^{\lambda_2+\alpha}&\propto \int_{0}^x dt_1 t_1^{\lambda_1}(1-t_1)^{\lambda_2}(x-t)^{\alpha-1}\int_x^1 dt_2 t_2^{\lambda_1}(1-t_2)^{\lambda_2}(t_2-x)^{\alpha-1}|t_1-t_2| \notag\\
&\propto I_1^{(\alpha)}(x),\label{xiI}
\end{align}
where the first proportionality follows from the Dixon-Anderson integral \cite[eq. (4.15)]{Fo10}. We remark that factorizations of differential operators play a prominent role in the recent studies of correlation functions in the two-dimensional Ising model \cite{ZBHM04,ZBHM05,BHMMWZ06,BHMMOZ07,
BBGHJMZ09}.

More generally, we see from (\ref{9b}) that for $\lambda=r/s$, $r$ and $s$ relatively prime positive integers. the elements of the connection matrix $C$ exhibit a congruence property. For $N\geq s$, $n\not=s+1$ we see that the latter is either a tensor product or block diagonal. As a consequence, in these cases $\{I_p^{(\alpha)}(x) \}$ can be written as powers or sums of solutions of lower order equations, as with (\ref{xiI}). In contrast, for generic $\lambda$ irrational the monodromy is irreducible and such reductions are not possible. 

To see this latter feature we use the fact \cite{An01} that the closure of any generic monodromy
group contains a conjugate of the monodromy group of any Fuchsian specialization. In
particular, it has been noted above that when $\lambda = 0$ the monodromy group is the
$N$-th symmetric power of the case $N = 2$. This in turn generically intersects
$SL_2$ in a Zariski dense subgroup. Thus the Zariski closure of the generic monodromy
group contains a copy of $SL_2$ in its (irreducible) $(N+1)$-dimensional representation.
In fact this subgroup of $SL_{N+1}$ is very nearly maximized in that the only permitted proper
subgroups of $SL_2$ strictly containing it are the orthogonal and symplectic groups. But our
monodromy matrices generically do not have eigenvalues occuring in complex
conjugate pairs and so our monodromy group is generically irreducible.

\subsection{Computation of $p_N(n;x)$}

With the joint eigenvalue PDF specified by (\ref{S}) we  have denoted by $p_N(n;x)$ the distribution of the $(n+1)$-st smallest eigenvalue $(n=0,1,\ldots,N-1)$. According to (\ref{E}), (\ref{2}) and (\ref{ea}) this distribution is given in terms of $\{I_k^{(\alpha)}(x) \}_{k=0,\ldots,n}$ by
\begin{equation}\label{pxn}
p_N(n;x)=-\frac{1}{S_N(\lambda_1,\lambda_2,\lambda)}\sum_{k=0}^n\frac{d}{d x} I_k^{(1)}(x). 
\end{equation}
We can use (\ref{ISC}) to compute the power series expansion of the $I_{k}^{(1)}(x)$, up to some given order, and thus the corresponding power series of $p_N(n;x)$.

Even with an infinite number of terms the pole at $x=1$ in the differential equation (\ref{ee}) tells us that generically each $I_k^{(1)}(x)$ will only have a radius of convergence of unity when expanded about the origin. Truncating the series must necessarily reduce this value as a practical figure for accurately determining (\ref{pxn}). But in keeping with (\ref{Hql}) we have the symmetry
\begin{equation*}
p_N(n;1-x)=p_N(N-n;x)\bigg|_{\lambda_1 \leftrightarrow \lambda_2},
\end{equation*}
implying that it is only necessary to calculate enough terms to ensure an accurate evaluation for $|x|\leq \frac{1}{2}$. 

Another relevant point is that $\sigma_k$ in (\ref{44'}) is, according to (\ref{19a}), a quadratic function of $k$. This means that a contribution of the linearly independent solution $(\mathbf{w}_N(x))_0^*$, say, does not show itself until order $x^{\sigma_N}=x^{O(N^2)}$ in the power series.

We have implemented the recurrence (\ref{89}) to compute the power series (\ref{44'}), and thus the power series of (\ref{ISC}) and (\ref{pxn}) using computer algebra software. This allows a case of (\ref{89}) for which $((\sigma+l)\mathbb{I}-A_0)$ is not invertible to be computed as the limit of a case of (\ref{89}) for which $((\sigma+l)\mathbb{I}-A_0)$ is invertible. But it comes at the expense of efficiency, and restricts us to small values of $N$. Nonetheless some interesting theory can be exhibited.

\begin{figure}[t]
\begin{center}
\includegraphics[height=5.5cm]{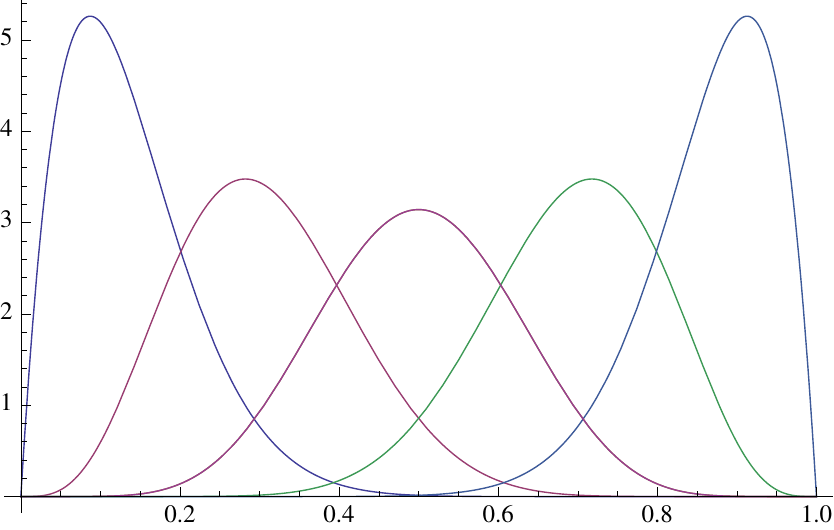}
\end{center}
\caption{Plot of the eigenvalue distributions $p_5(n;x)$ ($n=0,\dots,4$ reading left to right)
in the case $\lambda = 1/3$, $\lambda_1 = \lambda_2 = 1$. Due to $\lambda_1$ and
 $\lambda_2$ being equal we have the symmetry $p_5(n;x) = p_5(4-n;x)$, $n=0,1,2$. Also,
 $p_5(3;x)$ is given by (\ref{DM1}) with $a=b=1$.}
 \end{figure}

Generalizing results from \cite{FR01}, it was shown in \cite{Fo09} that
\begin{align}
&D_{r+1}({\rm ME}_{2/(r+1),(r+1)N+r}(x^a(1-x)^b)) \notag \\ 
&\hspace{1cm}={\rm ME}_{2(r+1),N}(x^{(r+1)a+2r}(1-x)^{(r+1)b+2r}). \label{DM}
\end{align}
Here ${\rm ME}_{\beta,m}(g(x))$ refers to the PDF proportional to 
\begin{equation*}
\prod_{l=1}^mg(x_l)\prod_{1\leq j < k \leq m}|x_k-x_j|^\beta
\end{equation*}
and $D_{r+1}(\cdot)$ denotes the distribution of the coordinates, assumed ordered $x_1<\ldots<x_m$, labelled by multiples of $(r+1)$ only. In particular, choosing $r=2$ and $N=1$, (\ref{DM}) tells us that 
\begin{equation}\label{DM1}
p_5(3;x)\bigg|_{\lambda=1/3}=\frac{1}{S_1(3a+4,3b+4,\cdot)}x^{3a+4}(1-x)^{3b+4}.
\end{equation} 
Indeed we find that computation of the series of $p_5(3;x)\big|_{\lambda=1/3}$ according to the above specified scheme is consistent with this result. A graphical illustration of our computation is given in Figure 1.

\section{Properties of the average  (\ref{8.1})}

\subsection{The case $\mu\in \mathbb{Z}^+$}
The average (\ref{8.1}) with $\mu\in \mathbb{Z}^+$, or more generally the average
\begin{equation} \label{xm}
\Big\langle \prod_{l=1}^N(x-t_l)^\nu\Big\rangle,\hspace{2cm}\nu\in \mathbb{Z}^+
\end{equation}
has the special property of being a polynomial in $x$. Moreover, it is simply related to the integral $I_N^{(\alpha)}(x)$. To see this make the change of variables $t_l \mapsto x t_l$ $(l=1,\ldots,N)$ in the definition (\ref{ea}) to obtain
\begin{equation*}
I_N^{(\alpha)}(x)=x^{\sigma_N}S_N(\lambda_1,\alpha-1,\lambda)\Big\langle \prod_{l=1}^N(1-x t_l)^{\lambda_2}\Big\rangle^\#
\end{equation*}
where the average $\langle \cdot \rangle^\#$ is with respect to the Selberg denisity $S_N(\lambda_1,\alpha-1,\lambda;\mathbf{t}).$ We read off from this that 
\begin{align}
\Big\langle \prod_{l=1}^N(x-t_l)^\nu\Big\rangle= &  \frac{x^{\nu N}}{S_N(\lambda_1,\lambda_2,\lambda)}\Big (x^{\sigma_N}I_N^{(\alpha)}(1/x) \Big )\bigg |_{\substack{\lambda_2\mapsto \nu \\ \alpha\mapsto \lambda_2+1}} \notag \\
= &  x^{\nu N}\Big (x^{\sigma_N} (\mathbf{w}_N ( \frac{1}{x} ))_0^* \Big )\Big |_{\substack{\lambda_2\mapsto \nu \\ \alpha\mapsto \lambda_2+1}} \label{IW}
\end{align}
where the second equality follows from (\ref{ISC}).

To gain intuition into the polynomial (\ref{xm}) it is useful to revise some known theory. First, a result of Aomoto \cite{Ao88} tells us that 
\begin{equation} \label{JP}
\Big\langle \prod_{l=1}^N(x-t_l)\Big\rangle=c_NP_N^{(\alpha,\beta)}(1-2x)
\end{equation}
where $c_N$ is chosen so that the coefficient of $x$ on the RHS has coefficient unity and 
\begin{equation} \label{JPa}
\alpha=(\lambda_1+1)/\lambda-1, \hspace{2cm} \beta=(\lambda_1+1)/\lambda-1
\end{equation}
and $P_N^{(\alpha,\beta)}(x)$ denotes the classical Jacobi polynomial. This latter interpretation draws attention to the asymptotic $N\rightarrow \infty$ behaviour of (\ref{xm}). As shown in \cite{BF97a} this task is made possible by the existence of a duality formula, refined in \cite[eq. (13.72)]{Fo10} to read
\begin{align}
\Big\langle \prod_{l=1}^N(x-t_l)^\nu \rangle \propto&  \Big\langle \prod_{l=1}^\nu e^{i \theta_l(\lambda_1-\lambda_2)/2\lambda} |1+e^{i \theta_l}|^{(\lambda_1+\lambda_2+2)/\lambda+N-2} \notag \\
& \hspace{3cm} \times \Big((1-x)e^{-i \theta_l/2}-xe^{i \theta_l/2} \Big)^N \Big\rangle_{{\rm CE}_{2/\lambda,\nu}}, \label{JP1}
\end{align}
where the constant of proportionality is determined as in (\ref{JP}) and $\langle \cdot \rangle _{{\rm CE}_{\beta,N}}$ denotes an average with respect to the circular $\beta$-ensemble
\begin{equation*}
\frac{1}{{\rm C}_{\beta,N}}\prod_{1\leq j < k\leq N}|e^{i\theta_k}-e^{i \theta_j}|^{\beta}.
\end{equation*}
The name duality formula comes from the interchange of the roles of the parameters $\nu$ and $N$ on the respective sides of (\ref{JP1}); also worth highlighting is the transformation of the coupling $\lambda\mapsto 2/\lambda$. These features characterize a wide class of duality formulas in random matrix theory \cite{De08, Ma08a}.

We can use (\ref{IW}) to give a very efficient evaluation of (\ref{xm}). Specifically, we use (\ref{26a}) to compute the coefficients of $\mathbf{p}_{l,N}$ $(l=1,\ldots, \nu N)$ in (\ref{44'}). According to (\ref{1q}), for positive parameters $\sigma_N>\sigma_j$ $(j=1,\ldots,N-1)$ and so $((\sigma_N+l)\mathbb{I}-A_0)$ in (\ref{26a}) is always invertible. Consequently the computation of $\mathbf{p}_{l,N}$ requires $l$ numerical matrix multiplications of $N\times N$ matrices and $l$ numerical matrix times vector operations at each step. Previously \cite{Fo93} the differential-difference equation (\ref{ec}) has been applied to this same task, as has an order $N+1$ difference equation satisfied by (\ref{ea}) for shifts $\alpha\mapsto \alpha+1$ \cite{FI10}. The drawback of these latter two approaches relative to the present one is the need to compute not just $I_N^{(\nu)}(x)$ but also $I_N^{(k)}(x)$ for each $k=1,\ldots, \nu-1$.

\begin{figure}[t]
\begin{center}
\includegraphics[height=6.5cm]{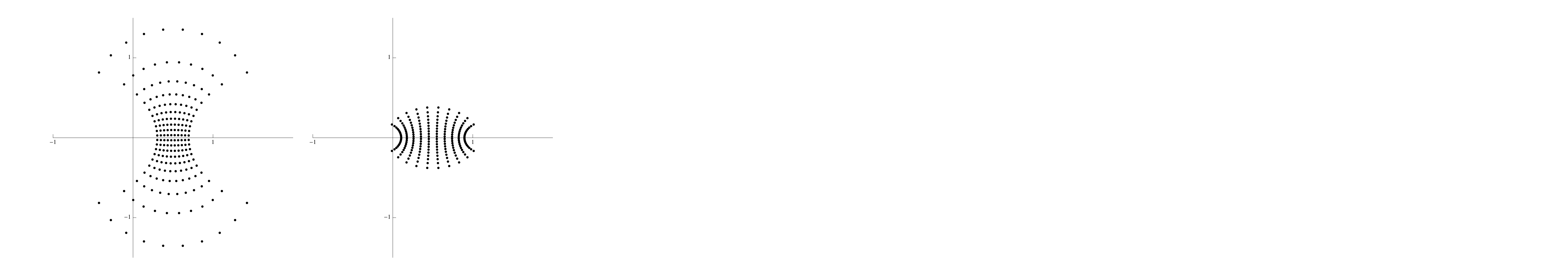}
\end{center}
\caption{Plot of the zeros of the polynomials (\ref{xm}) in the cases $N=10$, $\lambda = 1/3$,
$\lambda_1 = \lambda_2 = \lambda a$, $a=3$, $\nu = 20$ (left plot) and the same parameters 
$N,a,\nu$ but with
$\lambda = 3$. The ticks on the axes are the points $\pm 1$.\label{fig2}}
 \end{figure}
 
A consequence of our ability to efficiently compute the polynomial (\ref{xm}) is that we can investigate its zeros. As motivation, we recall that in the special case $\lambda=1$ (\ref{xm}) corresponds to a family of polynomial $\tau$-functions for the Painlev\`{e} $\rm{VI}$ system \cite{FW04}. The zeros of the polynomial $\tau$-functions for various Painlev\`{e} systems have been the subject of a number of studies by Clarkson (see e.g. \cite{Cl03a,Cl03b,Cl06}). Most interestingly, in the case of Painlev\`{e} $\rm{IV}$ it was shown that the zeros coincided with the equilibrium points for certain vortex flows \cite{Cl09}.

The free parameter $\lambda$ relative to the requirement $\lambda=1$ for the interpretation as a Painlev\`{e} $\tau$-function gives certain insights into the zeros of (\ref{xm}) for general $\lambda>0$. The main point is that with 
\begin{equation*}
\lambda_1\mapsto \lambda a, \hspace{2cm} \lambda_2\mapsto \lambda b, \hspace{2cm} \lambda\rightarrow \infty
\end{equation*}
the Selberg density crystallizes to the zeros of the Jacobi polynomial
\begin{equation} \label{Pn}
P_N^{(a-1,b-1)}(1-2x)
\end{equation}
(see e.g. \cite[Exercises 3.6 q.5]{Fo10}). Consequently in this limit (\ref{xm}) is then proportional to (\ref{Pn}) raised to the power of $\nu$. Remarkably (\ref{JP}), (\ref{JPa}) tells us that in the case $\nu=1$ this result persists for finite $\lambda$, up to the explicit value of the parameters $a,b$ in (\ref{Pn}). For $\nu>1$ and $\lambda$ finite we would expect at the very least that the $\nu$-fold degeneracy of the zeros present in the limit $\lambda\rightarrow \infty$ would be broken, but nonetheless that as in the case $\nu=1$ for there to remain signatures of the location of the zeros of (\ref{Pn}) when using parameters $\lambda_1 = \lambda a$, $\lambda_2 = \lambda b$.

Computation of the zeros shows that indeed the $\nu$-fold degeneracy of each zero present for $\lambda\rightarrow\infty$ is broken for finite $\lambda$. Instead for each of the $N$ previously $\nu$-fold degenerate zeros there are now $N$ distinctive curves in the complex plane containing the $\nu$ zeros,
and for large $\lambda$ these curves cut the real axis close to the zeros of (\ref{Pn}).
Because the polynomial (\ref{xm}) has real coefficients, the curves are invariant under reflection in the real axis, and this in turn tells us that for $\nu$ odd each curve contains a zero on the real axis. The length of the curves decreases as $\lambda$ increases, in keeping with the degeneracy for $\lambda\rightarrow \infty$. There are also distinctive curves tracing groups of $N$ zeros reading from left to right in the complex plane. At a qualitative level there being two classes of curves is in keeping with the duality formula (\ref{JP1}). The specific plots given in Figures \ref{fig2} and \ref{fig3}
illustrate these general features.

\begin{figure}[t]
\begin{center}
\includegraphics[height=6.5cm]{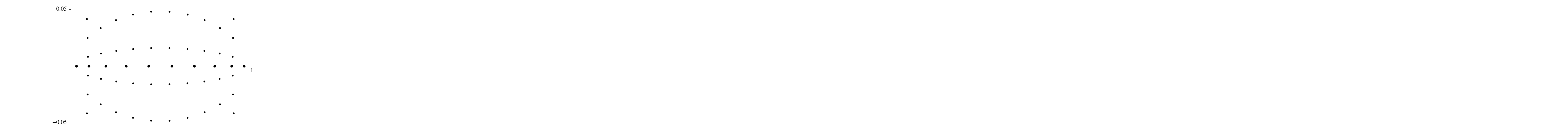}
\end{center}
\caption{Plot of the zeros of the polynomials (\ref{xm}) in the cases $N=10$, $\lambda = 3$,
$\lambda_1 = \lambda_2 = \lambda a$, $a=3$, $\nu = 20$ with imaginary part between
$-0.05$ and $0.05$ (zoom of right plot of Figure \ref{fig2}) super imposed with a plot of the
zeros of (\ref{Pn}) \label{fig3}}
 \end{figure}

\subsection{Asymptotics associated with (\ref{8.1})}

In (\ref{9.1}) and (\ref{9.2}) equations implying the functional forms of the asymptotic behaviour of the average (\ref{8.1}) in the limit $x\rightarrow 1^+$ are given. By the symmetry of the Selberg density (\ref{S}) under $t_l\mapsto1-t_l$ and $\lambda_1 \leftrightarrow \lambda_2$ this with $x\mapsto 1-x$ and $\lambda_2 \leftrightarrow \lambda_1$ gives the functional forms of the asymptotic behaviour of (\ref{8.1}) in the limit $x\rightarrow0^-$. We can use the expansion (\ref{ISC}) to reclaim these latter behaviours, and furthermore to specify the previously unknown proportionalities.

The first step is to write the average (\ref{8.1}) in terms of the integrals (\ref{ea}). Noting
\begin{equation*}
\int_{[0,1]^N}dt_1\ldots dt_N f(t_1,\ldots,t_N) =\sum_{q=0}^{N}\binom{N}{q}\int_{[0,x]^q}dt_1\ldots dt_q \int_{[x,1]^{N-q}} dt_{q+1}\ldots dt_N f(t_1,\ldots,t_N),
\end{equation*}
valid for any $f$ integrable on $[0,1]^N$ and symmetric, it follows
\begin{equation*}
\Big\langle \prod_{j=1}^N |t_j-x|^{2\mu} \Big\rangle = \frac{1}{S_N(\lambda_1,\lambda_2,\lambda)}\sum_{q=0}^{N}I_q^{(\alpha)}(x)\bigg|_{\alpha=2\mu+1} .
\end{equation*}
We now substitute (\ref{ISC}) to obtain
\begin{align}
&\Big\langle \prod_{j=1}^N|t_j-x|^{2\mu}\Big\rangle=\frac{1}{S_N(\lambda_1,\lambda_2,\lambda)}\sum_{q=0}^N\sum_{k=q}^N (\mathbf{w}_k(x))_0^*\binom{N}{k} \notag \\
& \hspace{1cm} \times S_k(\lambda_1,\alpha-1,\lambda)S_{N-k}(\lambda_1+\alpha-1+2k \lambda,\lambda_2,\lambda)c_{k,q}\bigg|_{\alpha=2\mu+1}. \label{SSc} 
\end{align}
By combining the inequalities (\ref{ss}) and (\ref{ts}) known from the proof of Proposition \ref{prop2} with the leading form of $(\mathbf{w}_k(x))^*$ as apparent from (\ref{51}), the following refinement of the asymptotic behaviour implied by (\ref{9.2}) is obtained.
\begin{proposition}\label{prop3}
Suppose $\lambda,\lambda_1,\lambda_2 \in \mathbb{R}$ obey (\ref{ea1}) but relax the condition on $\alpha$. Rather, for a given $l\in \{0,\ldots, N\}$ suppose
\begin{equation} \label{Sa}
-2l\lambda-\lambda_1<\alpha<-2(l-1)\lambda-\lambda_1, \hspace{1cm} \alpha=2\mu+1.
\end{equation}
We have
\begin{align}
&\Big\langle \prod_{j=1}^N|t_j-x|^{2\mu} \Big\rangle \mathop{\sim}_{x\rightarrow0}x^{\sigma_l}\frac{S_l(\lambda_1,\alpha-1,\lambda)S_{N-l}(\lambda_1+\alpha-1+2l\lambda,\lambda_2,\lambda)}{S_N(\lambda_1,\lambda_2,\lambda)} \notag \\
&\hspace{3.5cm} \times \binom{N}{l}\sum_{q=0}^{l}c_{l,q}\bigg|_{\alpha=2\mu+1}. \label{Sa1}
\end{align}
\end{proposition}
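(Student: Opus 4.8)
The plan is to extract the leading $x\to 0$ asymptotics term-by-term from the exact expansion (\ref{SSc}). The key input is that, under the hypothesis (\ref{Sa}), the inequalities (\ref{ss}) together with (\ref{ts})--(\ref{us}) identify $\sigma_l$ as the strict minimum of $\{\sigma_0,\sigma_1,\dots,\sigma_N\}$; here the constraint (\ref{Sa}) plays the role that (\ref{r2}) played in Proposition \ref{prop2}, since substituting $x^*$ from (\ref{us}) into (\ref{ts}) yields exactly (\ref{Sa}). Since each normalized series solution has the form $(\mathbf{w}_k(x))_0^* = x^{\sigma_k}(1 + O(x))$ as in (\ref{44'}), every summand in (\ref{SSc}) behaves like a constant times $x^{\sigma_k}(1+o(1))$, and as $x\to 0^+$ the dominant contribution comes precisely from the terms with $k=l$.

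Concretely, I would fix $l$ as in (\ref{Sa}) and rewrite the double sum in (\ref{SSc}) by interchanging the order of summation, writing $\sum_{q=0}^{N}\sum_{k=q}^{N} = \sum_{k=0}^{N}\sum_{q=0}^{k}$ (using $c_{k,q}=0$ for $k<q$ from (\ref{C1})). This gives
\begin{equation*}
\Big\langle \prod_{j=1}^N|t_j-x|^{2\mu}\Big\rangle = \frac{1}{S_N(\lambda_1,\lambda_2,\lambda)}\sum_{k=0}^{N}(\mathbf{w}_k(x))_0^*\binom{N}{k}S_k(\lambda_1,\alpha-1,\lambda)S_{N-k}(\lambda_1+\alpha-1+2k\lambda,\lambda_2,\lambda)\sum_{q=0}^{k}c_{k,q}\Big|_{\alpha=2\mu+1}.
\end{equation*}
Now isolate the $k=l$ term, substitute $(\mathbf{w}_l(x))_0^* = x^{\sigma_l}(1+O(x))$, and note that every other term is $O(x^{\sigma_k})$ with $\sigma_k > \sigma_l$ by (\ref{ss}); hence the ratio of any such term to $x^{\sigma_l}$ tends to $0$. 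This immediately produces (\ref{Sa1}), with the inner sum running to $l$ rather than $N$ because $c_{l,q}=0$ for $q>l$.

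One point requiring care is that the coefficients $c_{k,q}$ in (\ref{C2}) and the Selberg factors $S_k(\lambda_1,\alpha-1,\lambda)$ were originally derived under the restriction (\ref{r1}) (equivalently $\sigma_q > 0$), whereas (\ref{Sa}) forces $\sigma_l<0$ for $l\ge 1$, placing $\alpha$ outside that regime; so I would invoke Proposition \ref{prop2b}, which guarantees that $C$ is meromorphic in $\lambda_1$ (and likewise in $\alpha$), to analytically continue (\ref{ISC}), hence (\ref{SSc}), to the parameter range (\ref{Sa}). A second subtlety is the possibility of $\log$ terms in $(\mathbf{w}_l(x))_0^*$ when eigenvalues of $Y^+$ differ by an integer; but as remarked after (\ref{44'}) such cases arise as limits of the generic case, and since a factor of $\log x$ only contributes a subleading correction to $x^{\sigma_l}$ relative to $x^{\sigma_k}$ with $\sigma_k>\sigma_l$, the leading behaviour (\ref{Sa1}) is unaffected (one may also simply note that (\ref{Sa1}) is a meromorphic identity in the parameters, proved on a dense open set, hence valid generally). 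I expect the main obstacle to be bookkeeping rather than conceptual: verifying rigorously that no cancellation among the $k=l$ contributions (summed over $q$) occurs to kill the leading term — but since $S_l(\lambda_1,\alpha-1,\lambda)$, $S_{N-l}(\cdots)$ and $\binom{N}{l}$ are common factors and the surviving sum $\sum_{q=0}^l c_{l,q}$ is generically nonzero, this is not an issue for generic parameters, which is all that is claimed.
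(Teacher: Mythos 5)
Your proposal is correct and follows essentially the same route as the paper: the paper likewise substitutes (\ref{ISC}) into the sum over $q$ to get (\ref{SSc}), then invokes the inequalities (\ref{ss}) and (\ref{ts}) from the proof of Proposition \ref{prop2} (with (\ref{Sa}) playing the role of (\ref{r2})) to identify $\sigma_l$ as the strictly minimal exponent, so that only the $k=l$ terms survive in the $x\to 0$ limit. Your additional remarks on analytic continuation via Proposition \ref{prop2b} and on $\log$ terms as limits of the generic case merely make explicit what the paper leaves implicit.
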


At the left endpoint of (\ref{Sa}) we have $-2l\lambda-\lambda_1=\alpha$. According to (\ref{ts}) the minimum of $\sigma(x)$ then occurs at $x=l+1/2$. This means that both $\sigma_l$ and $\sigma_{l+1}$ are equally the smallest exponents and consequently
\begin{equation} \label{Sa2}
\Big\langle \prod_{j=1}^N |t_j-x|^{2\mu}\Big\rangle \mathop{\sim}_{x\rightarrow 0} \lim_{\alpha\rightarrow -2l\lambda-\lambda_1}\big(x^{\sigma_l}h(l)+x^{\sigma_{l+1}}h(l+1)\big)
\end{equation}
where $x^{\sigma_l}h(l)$ refers to the RHS of (\ref{Sa1}). The limit in (\ref{Sa2}) can be computed to obtain a refinement of the asymptotic behaviour implied by (\ref{9.1}).

\begin{proposition}
Suppose $\lambda,\lambda_1,\lambda_2 \in \mathbb{R}$ obey (\ref{ea1}) but relax the condition on $\alpha$ and replace it by
\begin{equation} \label{Sa3}
\alpha=-2l\lambda-\lambda_1, \hspace{1cm} l\in\{0,\ldots,N-1\}.
\end{equation}
We have
\begin{align}
\Big\langle \prod_{j=1}^N|t_j-x|^{2\mu} \Big\rangle \mathop{\sim}_{x\rightarrow0}&\Big(|x|^{\sigma_l}  \log \frac{1}{|x|} \Big)\frac{S_l(\lambda_1,\alpha-1,\lambda)S_{N-l}(\lambda_1+\alpha-1+2l\lambda,\lambda_2,\lambda)}{S_N(\lambda_1,\lambda_2,\lambda)\Gamma(\lambda_1+\alpha+2l \lambda)} \notag \\
& \times \binom{N}{l}\sum_{q=0}^{l}c_{l,q}\bigg|_{\alpha=2\mu+1}.\label{Sa4}
\end{align}

\end{proposition}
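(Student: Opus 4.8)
The plan is to take the relation (\ref{Sa2}) as given — it records that when $\alpha=-2l\lambda-\lambda_1$ the two smallest Frobenius exponents $\sigma_l$ and $\sigma_{l+1}$ coincide, so that the two corresponding terms of the generic, $\log$-free expansion (\ref{SSc}) must be retained and their $\alpha\to-2l\lambda-\lambda_1$ limit taken jointly — and to evaluate that limit by a residue analysis in $\alpha$. First I would set $\epsilon:=\sigma_{l+1}-\sigma_l$; using $\sigma_k=k(\lambda_1+\alpha+(k-1)\lambda)$ one gets $\epsilon=\lambda_1+\alpha+2l\lambda$, which tends to $0$ exactly under the hypothesis (\ref{Sa3}). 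Writing $\sigma_{l+1}=\sigma_l+\epsilon$ and denoting by $x^{\sigma_p}h(p)$ the $p$-th term on the right of Proposition~\ref{prop3} (so $h(l)=\frac{S_l(\lambda_1,\alpha-1,\lambda)S_{N-l}(\lambda_1+\alpha-1+2l\lambda,\lambda_2,\lambda)}{S_N(\lambda_1,\lambda_2,\lambda)}\binom{N}{l}\sum_{q=0}^{l}c_{l,q}$, and likewise $h(l+1)$ with $l\mapsto l+1$), the bracket in (\ref{Sa2}) reads $x^{\sigma_l}\bigl(h(l)+x^{\epsilon}h(l+1)\bigr)$.

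The next step is to locate the poles at $\epsilon=0$. In $h(l)$ every factor is analytic and, for generic $\lambda_1,\lambda$, nonzero at $\alpha=-2l\lambda-\lambda_1$, the one exception being $S_{N-l}(\lambda_1+\alpha-1+2l\lambda,\lambda_2,\lambda)$: its first argument equals $\epsilon-1$, so the product formula (\ref{2a}) contributes exactly one factor $\Gamma(\epsilon)$ to its numerator, hence a simple pole. Therefore $h(l)$ has a simple pole at $\epsilon=0$, and since $\epsilon\Gamma(\epsilon)\to1$ its residue $r_l:=\lim_{\epsilon\to0}\epsilon\,h(l)$ is obtained by replacing $S_{N-l}(\lambda_1+\alpha-1+2l\lambda,\lambda_2,\lambda)$ by the value of $S_{N-l}(\lambda_1+\alpha-1+2l\lambda,\lambda_2,\lambda)/\Gamma(\lambda_1+\alpha+2l\lambda)$ at the removable singularity and then putting $\alpha=-2l\lambda-\lambda_1=2\mu+1$; this $r_l$ is precisely the coefficient multiplying $|x|^{\sigma_l}\log\frac{1}{|x|}$ on the right of (\ref{Sa4}). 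For $h(l+1)$ the two Selberg integrals are analytic and generically nonzero at the relevant $\alpha$, whereas from (\ref{C2}) each $c_{l+1,q}$ with $q\le l$ has in its denominator the $j=1$ factor $\sin\pi(\lambda_1+\alpha+2l\lambda)=\sin\pi\epsilon$, hence a simple pole, while $c_{l+1,l+1}=1$; so $h(l+1)$ likewise has a simple pole at $\epsilon=0$, with some residue $r_{l+1}$.

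Finally, since the limit in (\ref{Sa2}) is finite, the $1/\epsilon$ singular parts of $h(l)$ and of $x^{\epsilon}h(l+1)=(1+\epsilon\log x+O(\epsilon^2))\,h(l+1)$ must cancel, which forces $r_l+r_{l+1}=0$. Hence $h(l)+x^{\epsilon}h(l+1)=r_{l+1}\log x+O(1)=-r_l\log x+O(1)$ as $\epsilon\to0$, so $x^{\sigma_l}\bigl(h(l)+x^{\epsilon}h(l+1)\bigr)\to x^{\sigma_l}\bigl(-r_l\log x+O(1)\bigr)$; since $-\log x=\log\frac{1}{|x|}$ dominates the bounded remainder as $x\to0$, (\ref{Sa2}) yields $\bigl\langle\prod_{j=1}^{N}|t_j-x|^{2\mu}\bigr\rangle\mathop{\sim}_{x\to0}r_l\,|x|^{\sigma_l}\log\frac{1}{|x|}$, which is (\ref{Sa4}) with $r_l$ as identified above. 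The main obstacle is the bookkeeping of the previous paragraph: confirming that a single identifiable factor supplies the simple pole in each of $h(l)$ and $h(l+1)$, and checking that the residue $r_l$ extracted from $h(l)$ reproduces the stated coefficient verbatim — the reading of $S_{N-l}/\Gamma$ as a removable singularity being the one genuinely delicate point. If one does not wish to use the finiteness of (\ref{Sa2}) to obtain $r_l+r_{l+1}=0$ for free, one must instead compute $r_{l+1}$ directly from $\sum_{q=0}^{l}\operatorname{Res}_{\alpha=-2l\lambda-\lambda_1}c_{l+1,q}$ (using $\lim_{\epsilon\to0}\epsilon/\sin\pi\epsilon=1/\pi$ together with (\ref{2a}) and (\ref{29a})) and verify $r_{l+1}=-r_l$ by elementary trigonometric and Gamma-function identities; this is the only computation of any length, and it is routine.
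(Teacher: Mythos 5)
Your proof is correct and follows the same skeleton as the paper's: both start from (\ref{Sa2}), set $\epsilon=\lambda_1+\alpha+2l\lambda$, locate the simple pole of $h(l)$ in the single factor $\Gamma(\lambda_1+\alpha+2l\lambda)$ that (\ref{2a}) contributes to $S_{N-l}(\lambda_1+\alpha-1+2l\lambda,\lambda_2,\lambda)$, locate the simple pole of $h(l+1)$ in the $j=1$ denominator factor $\sin\pi(\lambda_1+\alpha+2l\lambda)$ of each $c_{l+1,q}$ with $q\le l$ (and note $c_{l+1,l+1}=1$ is regular), and extract the logarithm from $x^{\epsilon}=1+\epsilon\log x+O(\epsilon^{2})$; your identification of the residue $r_l$ matches the coefficient in (\ref{Sa4}) exactly. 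The one genuine difference is how the cancellation $r_{l+1}=-r_l$ (equivalently $h(l+1)\sim -h(l)$) is established. The paper does it by direct computation: it factors out $c_{l+1,l}$, shows $\sum_{q\le l}c_{l+1,q}/c_{l+1,l}\sim\sum_{q\le l}c_{l,q}$, and evaluates the ratios (\ref{C1b})--(\ref{C4b}) via (\ref{2a}), (\ref{C2}) and the reflection formula, arriving at (\ref{h1}). You instead deduce the cancellation from the requirement that the limit in (\ref{Sa2}) be finite. That is a legitimate and much shorter route, but it leans entirely on (\ref{Sa2}) being accepted as a bona fide finite limit --- something the paper asserts rather than derives in detail --- so the paper's explicit verification of $h(l+1)\sim-h(l)$ doubles as an independent consistency check that your main route forgoes. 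You do flag the direct residue computation as a fallback, and that fallback is precisely the calculation the paper carries out.
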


\begin{proof}
We seek the individual leading order asymptotic form of the two terms in (\ref{Sa2}). For the first of these, by definition 
\begin{equation} \label{LS}
h(l)=\binom{N}{l}\frac{S_l(\lambda,\alpha-1,\lambda)}{S_N(\lambda_1,\lambda_2,\lambda)}S_{N-l}(\lambda_1+\alpha-1+2l\lambda,\lambda_2,\lambda)\sum_{q=0}^Nc_{l,q}.
\end{equation}
Now we see from (\ref{2a}) that the term $S_{N-l}$ in (\ref{LS}) contains factors
\begin{equation*}
\Gamma(\lambda_1+\alpha+(2l+j)\lambda)\hspace{2cm} (j=0,\ldots,N-l-1)
\end{equation*}
With $j=0$ we see that this factor diverges as $1/(\alpha+\lambda_1+2l\lambda)$ while all other terms in (\ref{LS}) remain finite and so 
\begin{align}
h(l)\mathop{\sim}_{\alpha \rightarrow -2l \lambda-\lambda_1}\hspace{0.2cm}&\frac{1}{\alpha+\lambda_1+2l \lambda}\binom{N}{l}\frac{S_l(\lambda_1,\alpha-1,\lambda)}{S_N(\lambda_1,\lambda_2,\lambda)}\frac{S_{N-l}(\lambda_1+\alpha-1+2l\lambda,\lambda_2,\lambda)}{\Gamma(\lambda_1+\alpha+2l\lambda)} \notag \\
&\times \sum_{q=0}^lc_{l,q}. \label{h0}
\end{align}

In contrast to the behaviour of $h(l)$ in the limit $\alpha\rightarrow-2l\lambda-\lambda_1$, we see that $h(l+1)$, defined by (\ref{LS}) with $l\mapsto l+1$ has all terms involving Selberg integrals remaining finite in this limit. Instead, as seen from (\ref{C2}), each term $c_{l,q}$ $(q=0,\ldots,l)$ diverges (but not $c_{l+1,l+1}=1$) and we have 
\begin{align}
 h(l+1) \mathop{\sim}_{\alpha\rightarrow-2l\lambda-\lambda_1} &\binom{N}{l+1}     \frac{S_{l+1}(\lambda_1,\alpha-1,\lambda)}{S_N({\lambda_1,\lambda_2,\lambda)}}S_{N-l-1}(\lambda_1+\alpha-1+2(l+1)\lambda,\lambda_2,\lambda)  \notag \\
& \times c_{l+1,l} \sum_{q=0}^{l} \frac{c_{l+1,q}}{c_{l+1,l}}\label{h51}
\end{align}
The expressions (\ref{h51}) and (\ref{h0}) are closely related. First, we note from (\ref{C2}) that in the limit $\alpha\rightarrow -2l\lambda-\lambda_1$,
\begin{equation}\label{C1b}
c_{l+1,l}\mathop{\sim}\frac{\sin \pi(l+1)\lambda \sin\pi(\lambda_1+l\lambda)}{\pi (\lambda_1+\alpha+2l \lambda) \sin \pi \lambda}
\end{equation}
and 
\begin{equation}\label{C2b}
\sum_{q=0}^l\frac{c_{l+1,q}}{c_{l+1,l}}\mathop{\sim}\sum_{q=0}^{l}c_{l,q}.
\end{equation}
Then we use (\ref{2a}), and the reflection formula for the gamma function to deduce that in the same limit
\begin{equation}\label{C3b}
\frac{S_{l+1}(\lambda_1,\alpha-1,\lambda)}{S_l(\lambda_1,\alpha-1,\lambda)}\mathop{\sim}-\frac{\pi (l+1)\sin\pi\lambda}{\sin \pi (l+1)\lambda \sin \pi (\lambda_1+l \lambda)}
\end{equation}
and
\begin{equation}\label{C4b}
\frac{S_{N-l-1}(\lambda_1+\alpha-1+2(l+1)\lambda,\lambda_2,\lambda)\Gamma(\lambda_1+\alpha+2l\lambda)}{S_{N-l}(\lambda_1+\alpha-1+2l\lambda,\lambda_2,\lambda)}\mathop{\sim}\frac{1}{N-l}.
\end{equation}
Substituting (\ref{C1b})-(\ref{C4b}) in (\ref{h51}) shows
\begin{equation}\label{h1}
h(l+1)\mathop{\sim}_{\alpha\rightarrow -2l \lambda-\lambda_1}-h(l)
\end{equation}
Now substituting (\ref{h0}) and (\ref{h1}) in (\ref{Sa2}) implies the result (\ref{Sa4}). 
\hfill $\square$  \end{proof}

\section*{Acknowledgement}
The contribution to the preparation of this paper by Wendy Baratta and James Saunderson is
acknowledged. This work was supported by the Australian Research Council.

%\bibliographystyle{amsplain}
%\bibliography{book1}

\providecommand{\bysame}{\leavevmode\hbox to3em{\hrulefill}\thinspace}
\providecommand{\MR}{\relax\ifhmode\unskip\space\fi MR }
% \MRhref is called by the amsart/book/proc definition of \MR.
\providecommand{\MRhref}[2]{%
  \href{http://www.ams.org/mathscinet-getitem?mr=#1}{#2}
}
\providecommand{\href}[2]{#2}

\end{document}